\newtheorem{prop}{Proposition}
\newtheorem{thm}{Theorem}
\newtheorem{algorithm}{Algorithm}
\newtheoremstyle{remark2}{1ex}{1ex}%
      {}
      {}
      {\bf}
      {.}
      {5pt}
      {\thmname{#1}\thmnumber{ #2}\thmnote{ \slshape{(#3)}}} 
\theoremstyle{remark2}
\newtheorem{rem}{Remark}
\newtheoremstyle{remark3}{1ex}{1ex}%
      {}
      {}
      {\bf}
      {.}
      {5pt}
      {\thmname{#1}\thmnumber{.#2}\thmnote{ \slshape{(#3)}}} 
\theoremstyle{remark3}
\newcommand{\1}{\mathds{1}}
\renewenvironment{proof}[1][\bfseries\proofname]{\par
   \pushQED{\qed}%
   \normalfont \topsep6\p@\@plus6\p@\relax
   \trivlist
   \item[\hskip\labelsep
     #1\@addpunct{:}]\ignorespaces
}{%
   \popQED\endtrivlist\@endpefalse
}
\newcommand{\Comments}{1}
\newcommand{\mynote}[2]{\ifnum\Comments=1\textcolor{#1}{#2}\fi}
\newcommand{\mytodo}[2]{\ifnum\Comments=1%
  \todo[linecolor=#1!80!black,backgroundcolor=#1,bordercolor=#1!80!black]{#2}\fi}
\newcommand{\CoVaR}{\operatorname{CoVaR}}
\newcommand{\RCoVaR}{\operatorname{RCoVaR}}
\newcommand{\CoES}{\operatorname{CoES}}
\newcommand{\MES}{\operatorname{MES}}
\newcommand{\VaR}{\operatorname{VaR}}
\newcommand{\Bern}{\operatorname{Bern}}
\newcommand{\D}{\,\mathrm{d}}
\renewcommand{\E}{\mathbb{E}}
\renewcommand{\P}{\mathbb{P}}
\renewcommand{\a}{\alpha}
\renewcommand{\b}{\beta}
\begin{document}

\baselineskip18pt
\renewcommand\floatpagefraction{.9}
\renewcommand\topfraction{.9}
\renewcommand\bottomfraction{.9}
\renewcommand\textfraction{.1}
\setcounter{totalnumber}{50}
\setcounter{topnumber}{50}
\setcounter{bottomnumber}{50}
\abovedisplayskip1.5ex plus1ex minus1ex
\belowdisplayskip1.5ex plus1ex minus1ex
\abovedisplayshortskip1.5ex plus1ex minus1ex
\belowdisplayshortskip1.5ex plus1ex minus1ex

\title{Systemic Risk Surveillance\thanks{
		The first author gratefully acknowledges support of the Deutsche Forschungsgemeinschaft (DFG, German Research Foundation) through grant 502572912 and the second author through grants 460479886, 531866675 and 568876076. 
		Replication material for the simulations and application is available on Github under \href{https://github.com/TimoDimi/replication_CoVaR_Monitoring}{https://github.com/TimoDimi/replication\_CoVaR\_Monitoring}.
	}
}

\author{
	Timo Dimitriadis\thanks{Faculty of Economics and Business, Goethe University Frankfurt, 60629 Frankfurt am Main, Germany, and Heidelberg Institute for Theoretical Studies, \href{mailto:dimitriadis@econ.uni-frankfurt.de}{dimitriadis@econ.uni-frankfurt.de}.}
\and 
	Yannick Hoga\thanks{Faculty of Economics and Business Administration, University of Duisburg-Essen, Universit\"atsstra\ss e 12, D--45117 Essen, Germany, \href{mailto:yannick.hoga@vwl.uni-due.de}{yannick.hoga@vwl.uni-due.de}.}
}

\date{\today}
\maketitle

\begin{abstract}
	\noindent 
	Following several episodes of financial market turmoil in recent decades, changes in systemic risk have drawn growing attention.
	Therefore, we propose surveillance schemes for systemic risk, which allow to detect misspecified systemic risk forecasts in an ``online'' fashion.
	This enables daily monitoring of the forecasts while controlling for the accumulation of false test rejections.
	Such online schemes are vital in taking timely countermeasures to avoid financial distress. 
	Our monitoring procedures allow multiple series at once to be monitored, thus increasing the likelihood and the speed at which early signs of trouble may be picked up. 
	The tests hold size by construction, such that the null of correct systemic risk assessments is only rejected during the monitoring period with (at most) a pre-specified probability. 
	Monte Carlo simulations illustrate the good finite-sample properties of our procedures.	An empirical application to US banks during multiple crises demonstrates the usefulness of our surveillance schemes for both regulators and financial institutions. \\
	
	\noindent \textbf{Keywords:} CoVaR, Forecasting, Monitoring, Multiple Testing, Systemic Risk \\
	\noindent \textbf{JEL classification:} C52 (Model Evaluation, Validation, and Selection); G17 (Financial Forecasting and Simulation); G32 (Financial Risk and Risk Management)
\end{abstract}


\section{Motivation}

\doublespacing      

The numerous financial crises of recent times and their severe economic reverberations have raised awareness of the importance of systemic risk \citep{AB16,Aea17,VZ19}. 
To better appreciate the difference between risk and \textit{systemic} risk, consider the returns on shares of banks.
While bank returns are often subject to higher volatility during crises (implying larger risk), this does not necessarily translate into increased commonality (implying larger \emph{systemic} risk). 
However, it is precisely the increased commonality that regulators have come to be most concerned about, as this may entail system-wide distress with potentially severe economic costs \citep{GKP16}.

Therefore, it is important to evaluate systemic risk forecasts in the financial system.
While classical financial ``one-shot'' backtests are designed for such forecast evaluations, under repeated application their statistical type I errors (false test rejections) accumulate over time---up to the point that a true null will be rejected with probability approaching one.
As systemic risks are to be monitored continuously (e.g., daily), it is necessary to control \emph{accumulated} false rejections, aligning with recent interest in safe anytime-valid statistical inference \citep{shafer2021testing, vovk2021values, ramdas2023game, WWZ23}.
In this paper, we interchangeably call  methods with ``time-uniform'' false rejection guarantees monitoring procedures, surveillance schemes or online tests.

Despite the apparent need for systemic risk surveillance, there is a lack of statistically valid tools for this in the literature. 
It is the main aim of this paper to fill this gap by providing such  tools and to show their validity.
To the best of our knowledge, there only exist ``one-shot'' backtests that assess the adequacy of systemic risk forecasts \citep{Bea21,FH21}.
However, their rejection rates would accumulate under repeated application.
We propose monitoring procedures for the most popular systemic risk measure, the conditional Value-at-Risk (CoVaR) of \citet{AB16}, and the reverse CoVaR (RCoVaR).
In Appendix~\ref{sex:CoESandMES}, we also develop monitoring schemes for the conditional expected shortfall (CoES) and marginal expected shortfall (MES) of  \citet{Aea17}.

For the construction of the monitoring procedure for the CoVaR as the most important systemic risk measure, we draw on recent work of \citet{FH21}.
They provide so-called {identification functions} that uniquely identify both the stand-alone risk and systemic risk.
Under the null hypothesis that the CoVaR forecasts are correctly specified (i.e., calibrated), the probabilistic structure of the corresponding identification functions is fully known: they are independent and identically distributed (IID) Bernoulli random variables. This remains true without making any additional assumptions about the dynamic structure of the return series.
As the probabilistic structure under the null is fully known, it can be used to simulate time-uniform and non-asymptotic critical values.
The simulation-based construction is particularly useful as systemic risk events are rare by definition, such that standard asymptotic theory may not be relied upon to provide accurate approximations in finite samples \citep{Hog19a+}.
We also extend our monitoring procedure to the CoVaR with reversed conditioning.

Because the probabilistic structure of the identification functions for CoES and MES is not fully characterized under the null, a direct extension of our CoVaR monitoring scheme is infeasible. Instead, in Appendix~\ref{sex:CoESandMES}, we exploit a systemic-risk analogue of the classical \textit{cumulative violation sequence} used in \citet{Acerbi2002spectral}, \citet{DE17}, \citet{Du2024powerful}, among others. 
We show that, under the null, the full probabilistic structure of this sequence---namely its uniform-type distribution and independence---is known, which again enables the computation of non-asymptotic, time-uniform critical values via simulation.

Our approach of constructing the CoVaR monitoring procedure is closely related to \citet{HD22a+}, who propose online detection schemes for single \textit{univariate} risk forecasts, viz.~Value-at-Risk (VaR) and expected shortfall (ES). 
However, we improve upon their procedure by using normalized detectors, and most importantly, extend their theory to \emph{systemic} risk measures.
The latter necessitates monitoring multiple time series at once, whereas \citet{HD22a+} only consider a single series.
This multivariate monitoring necessitates a Bonferroni-type correction to obtain critical values with finite-sample validity, which are not too conservative in practice.
A further benefit of the Bonferroni correction is that it allows to attribute a monitoring alarm to a specific financial institution.

Additional work that is related to ours are the monitoring procedures proposed by \citet{WG13}, \citet{NLL14} and \citet{MSW20}. 
These authors do, however, not focus on systemic risk itself, but only on quantities that can at best be described as crude approximations to it, such as correlation \citep{WG13} and the copula \citep{NLL14,MSW20}. 
Moreover, these procedures rely on an initial period of non-contamination, which our approach does not require.

Regarding calibration (backtesting) under repeated use, our work relates to the literature on safe and anytime-valid inference (SAVI) based on e-values. For VaR and ES, several e-value--based monitoring schemes have been proposed, but they do not extend to \emph{systemic} risk measures. While e-value tests are typically valid indefinitely, they are often overly conservative. In contrast, our approach fixes the monitoring horizon and delivers non-asymptotic, simulation-based, and hence (almost) exact size control via (almost) exact critical values.
\citet{horvath2025sequential} take yet another approach by proposing a sequential monitoring procedure for VaR and ES models that is, however, highly model-specific and based on asymptotic theory.

Our Monte Carlo simulations use a realistic DCC--GARCH model \citep{Eng02} for the asset returns to demonstrate the good finite-sample properties of our  monitoring procedures. 
Despite the use of a potentially conservative Bonferroni-type correction, the empirical size remains close to the nominal level---even in moderately high-dimensional settings, where numerous individual tests could, in principle, amplify such conservative tendencies.
We also show that, under the null, the detector to first raise a false alarm pertains to the VaR or the CoVaR with equal probability. The simulations further demonstrate the high power of our procedures in identifying misspecified forecasts, again exhibiting a balanced pattern regarding which detector first signals a suboptimal predictive performance.

In the empirical application, we apply our monitoring procedure to the stock returns of systemically relevant US banks.
In doing so, we evaluate systemic risk forecasts from two competing multivariate GARCH models over a calm period and three turbulent episodes: the global financial crisis, the COVID-19 pandemic, and the recent US~tariff period under President Trump.
The monitoring results indicate that the constant covariance structure of the simpler CCC–GARCH model is inadequate for capturing the \emph{systemic} component of financial risk during such volatile periods, whereas the forecasts of the more flexible DCC–GARCH cannot be rejected.

The remainder of the paper is structured as follows. 
Section~\ref{Systemic Risk Monitoring} formally introduces the CoVaR and its reverse variant and our corresponding monitoring procedures. 
A related monitoring scheme for the CoES and MES is constructed in Appendix~\ref{sex:CoESandMES}.
Simulations in Section~\ref{Simulations} explore the finite-sample performance of our methods, and Section~\ref{Empirical Application} presents the empirical application.
Finally, Section~\ref{Conclusion} concludes. 
Next to our CoES and MES surveillance schemes in Section~\ref{sex:CoESandMES}, the appendix contains Section~\ref{sec:Proofs of the main paper} with the proofs for our CoVaR procedures, and Section~\ref{sec:Algorithms} that contains the algorithms to compute critical values.

\section{Systemic Risk Monitoring}
\label{Systemic Risk Monitoring}

First, we define the systemic risk measures we consider in Section~\ref{Defining Systemic Risk}.
Then, we introduce monitoring procedures for the CoVaR in Section~\ref{CoVaR Monitoring}, which we extend to the reverse CoVaR in Section~\ref{ReversedCoVaRMonitoring}. 
Surveillance schemes for the CoES and MES are presented in Appendix~\ref{sex:CoESandMES}.

\subsection{Defining Systemic Risk Measures}
\label{Defining Systemic Risk}

\sloppy 
Consider the to-be-monitored sequence of random variables $\big\{(X_t,Y_{1t},\ldots,Y_{Kt})^\prime\big\}_{t \in \mathbb{N}}$, where $K \in \mathbb{N}$ is the number of monitored variables.
Here, $Y_{kt}$ ($k=1,\ldots,K$) stands for the log-losses of interest (e.g., the losses of a bank's shares / losses of a business unit) and $X_t$ are the log-losses of some reference position (e.g., system-wide losses in the financial system / bank-wide losses of all business units).
In a concrete monitoring situation, the variables at time $t=1,2\ldots$ are still to be (sequentially) observed and are not yet available when setting up the surveillance scheme at time $t=0$.

Let $\mathcal{F}_{t}=\sigma\big((X_t,Y_{1t},\ldots,Y_{Kt})^\prime,\mZ_t,(X_{t-1},Y_{1,t-1},\ldots,Y_{K,t-1})^\prime,\mZ_{t-1},\ldots\big)$ denote the time-$t$ information set with (possibly multivariate) $\mZ_t$ containing additional covariates.
Further, we denote by $F_{(X_t,Y_{kt})^\prime\mid\mathcal{F}_{t-1}}(\cdot,\cdot)$ the cumulative distribution function (CDF) of $(X_t,Y_{kt})^\prime\mid\mathcal{F}_{t-1}$ for $k\in[K]$, where we use the shorthand notation $[K] := \{1,\dots,K\}$ for any $K \in \mathbb{N}$.
Throughout this paper, we assume that $(X_t,Y_{kt})^\prime\mid\mathcal{F}_{t-1}$ has a strictly positive Lebesgue density for all $(x,y)\in\mathbb{R}^2$ such that $F_{(X_t,Y_{kt})^\prime\mid\mathcal{F}_{t-1}}(x,y)\in(0,1)$.
This assumption ensures that we do not have to rely on generalized inverses to compute our (quantile-based) risk measures, which we introduce next.

Define the VaR as the quantile at level $\b \in (0,1)$ of the conditional distribution $F_{X_t\mid\mathcal{F}_{t-1}}$, i.e., $\VaR_{t,\b}=F_{X_t\mid\mathcal{F}_{t-1}}^{-1}(\b)$.
The VaR is a popular \textit{univariate} risk measure.
Note that the inverse $F_{X_t\mid\mathcal{F}_{t-1}}^{-1}(\cdot)$ exists thanks to our assumption on the distribution of $(X_t, Y_{kt})^\prime\mid\mathcal{F}_{t-1}$.

We now introduce the systemic risk measures CoVaR and its reverse variant that we denote by RCoVaR.
The stress event in the definition of all these measures is that the loss of the reference position exceeds its VaR, i.e., $\{X_t\geq\VaR_{t,\b}\}$.
Then, we define the CoVaR of the $k$-th series $Y_{kt}$ as
\begin{align}
	\label{eqn:DefCoVaR}
	\CoVaR_{kt,\a|\b}=F_{Y_{kt}\mid X_t \geq \VaR_{t,\b},\mathcal{F}_{t-1}}^{-1}(\a),
\end{align}
where $F_{Y_{kt}\mid X_t \geq \VaR_{t,\b}, \mathcal{F}_{t-1}}(\cdot)=\P\{Y_{kt}\leq \cdot\mid X_t \geq \VaR_{t,\b},\ \mathcal{F}_{t-1}\}$, and $\a \in (0,1)$.
As for the VaR, our assumption on the CDF of $(X_t, Y_{kt})^\prime\mid\mathcal{F}_{t-1}$ ensures that the inverse in \eqref{eqn:DefCoVaR} exists.
Note that the CoVaR is simply the $\a$-VaR of the distribution of $Y_{kt}$ \textit{conditional} on $\{X_t\geq\VaR_{t,\b}\}$, i.e., conditional on $X_t$ being \emph{in distress}. 
Without the conditioning on the distress event $\{X_t\geq\VaR_{t,\b}\}$ (or, equivalently, with $\beta=0$), the CoVaR reduces to the VaR of $Y_{kt}$, such that 
$\CoVaR_{t,\a|0} = F_{Y_{kt}\mid\mathcal{F}_{t-1}}^{-1}(\a)$.
Since $X_t$ denotes financial losses, we typically consider values for $\a$ and $\beta$ close to one for the CoVaR to capture (downside) systemic risk (e.g., $\a=\b=0.95$).
When $\a=\b$ we simply write $\CoVaR_{kt,\a} = \CoVaR_{kt,\a|\a}$. 

Following among others \citet{GT13}, \citet{NZ20} and \citet{Bea21}, our CoVaR definition deviates from the original one of \citet{AB16}, who use $\{X_t=\VaR_{t,\b}\}$ as the stress event. This latter choice is problematic because it often has probability zero and it does not fully incorporate all tail events of $X_t$.
We refer to \citet[Section~2.1]{DH24} for a detailed account of the advantages of the definition in \eqref{eqn:DefCoVaR}.

The above interpretation of the CoVaR in \eqref{eqn:DefCoVaR} with $Y_{kt}$ as losses of financial institutions and $X_t$ as market losses coincides with what \citet[Sec.~II.D]{AB16} call the \textit{Exposure CoVaR}, and it measures how institution $k$ is \emph{exposed} to market risk.
A regulator might however also be interested in the reverse direction, i.e., how an individual bank \textit{affects} the market.
Therefore, we additionally define the CoVaR with \emph{reverse} conditioning as 
\begin{align}
	\label{eqn:DefRevCoVaR}
	\RCoVaR_{kt,\a|\b} = F_{X_t\mid Y_{kt} \geq \VaR_{kt,\b},\mathcal{F}_{t-1}}^{-1}(\a),\qquad k \in [K],
\end{align}
where $\VaR_{kt,\b} = F_{Y_{kt}\mid\mathcal{F}_{t-1}}^{-1}(\b)$ is now specific to institution $k$.
The reverse CoVaR in \eqref{eqn:DefRevCoVaR} measures the impact of distress in bank $k$ on the general market, which corresponds to the notion of banks as transmitters of systemic risk.
In contrast, our initial CoVaR definition views banks as receivers of systemic risk.
Of course, both definitions reveal useful information, and which definition is more suitable depends on the context.

\subsection{CoVaR Monitoring}
\label{CoVaR Monitoring}

We now propose procedures for monitoring the correct specification of CoVaR forecasts.
Importantly, the monitoring should be carried out in an \textit{online} fashion, i.e., sequentially as new observations become available.
Classical backtests with asymptotic validity based on a test statistic $Z_T$ with sample size $T$ and critical value $c_\iota$ for significance level $\iota \in (0,1)$ require that under the null hypothesis of correct specification, $H_0$, the asymptotic rejection probability $\lim_{T \to \infty} \P_{H_0}\{Z_T > c_\iota\} \le \iota$ is below $\iota$.
In contrast, sequential \emph{monitoring procedures} satisfy the stronger non-asymptotic notion
\begin{align}
	\label{eqn:MonitoringGuarantee}
	\P_{H_0} \big\{ \exists T \in \mathcal{I}: \; Z_T > c_\iota \big\} \le \iota.
\end{align}
This implies that even though the test decision is monitored every trading day---not unusual in a risk management context---the \textit{accumulated} probability of a  false rejection (type I error) remains controlled.
We will achieve such \textit{finite sample} type I error guarantees in \eqref{eqn:MonitoringGuarantee} by constructing detectors (essentially: test statistics), whose probabilistic structure is (almost) fully known under the null hypothesis, such that exact critical values can be calculated.

\begin{figure}[tb]
	\centering
	\includegraphics[width=\textwidth]{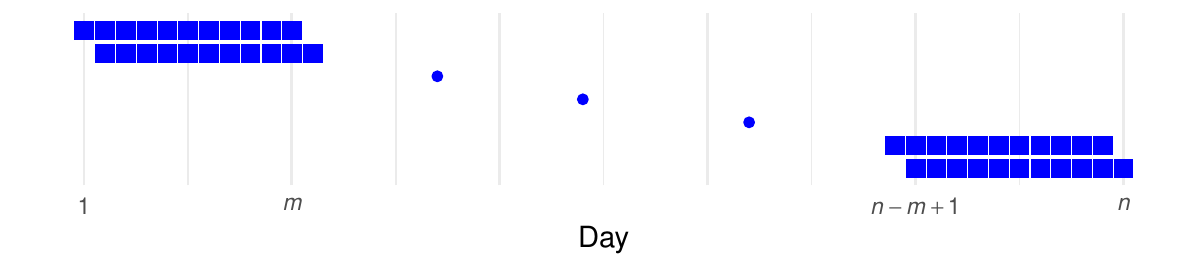}
	\caption{Illustration of the Monitoring Windows.}
	\label{fig:MonitoringWindows}
\end{figure}

In a practical monitoring situation, we are about to sequentially observe $(X_1,Y_{11},\ldots,Y_{K1})^\prime, \ldots, (X_n,Y_{1n},\ldots,Y_{Kn})^\prime$ at times (trading days) $1,\dots,n$.
Our monitoring procedure defined below requires a \emph{rolling monitoring window} of fixed length $m \le n$, as illustrated by the blue bars in Figure~\ref{fig:MonitoringWindows}.
Thus, at each \textit{monitoring time} $T \in \{m,\dots,n\}$---the endpoint of a given monitoring window---the procedure relies on the window spanning the times $\{T-m +1 ,\dots,T\}$.
We only start monitoring when the first full window of $m$ trading days is available at time $T=m$.
While starting earlier with an initially expanding window would be possible, this would come at the cost of an explosion of technicalities, which we avoid for better readability.

We now formalize the null hypothesis of \emph{risk measure forecast adequacy}, which is closely related to the statistical notion of conditional forecast calibration; see \citet{GR_2023} for a recent and detailed treatment of forecast calibration.
The strongest notion of \textit{ideal} forecasts $\widehat{\VaR}_{t,\beta}$ and $\widehat{\CoVaR}_{kt,\a|\b}$ for the VaR and CoVaR is given by
\begin{equation*}
	H_0^{\CoVaR} \colon \widehat{\VaR}_{t,\b} = \VaR_{t,\b}
	\qquad \text{and} \qquad 
	\widehat{\CoVaR}_{kt,\a|\b} = \CoVaR_{kt,\a\mid\b}, \quad k \in [K],\quad t \in \mathbb{N}.
\end{equation*}
We synonymously refer to this null as testing ideal forecasts, correct specification or calibration.

Clearly, $H_0^{\CoVaR}$ is not directly testable, because the true $\VaR_{t,\b}$ and $\CoVaR_{kt,\a|\b}$ are not observable even \textit{ex post}.
To circumvent this, we exploit the existence of a strict identification function for the CoVaR (jointly with the VaR), given in \citet[Theorem~S.3.1]{FH21},
\begin{align}
	\label{eqn:CoVaRJointID}
	\mV \big((v,c)', (x,y)' \big) = 
	\begin{pmatrix}\1_{\{x\leq v\}}-\beta\\ \1_{\{x> v\}}\big[\1_{\{y\leq c\}}-\alpha\big]\end{pmatrix}.
\end{align}
Identification functions are at the heart of many classical backtests \citep{NZ17}.
The property that renders them useful in testing calibration is that their (conditional) expectation is zero \textit{if and only if} the true (conditional) VaR and CoVaR are inserted.
More precisely, \citet[Theorem~S.3.1]{FH21} show that 
\begin{align*}
	\E \Big[\mV\big( ( v,c)' , (X_t, Y_{kt})' \big) \;\Big|\; \mathcal{F}_{t-1} \Big] = \vzeros \qquad\Longleftrightarrow\qquad 
	v = \VaR_{t,\b}  \;  \text{ and } \;   c = \CoVaR_{kt,\a|\b}
\end{align*}
under our conditions on the CDF of $(X_t, Y_{kt})^\prime\mid\mathcal{F}_{t-1}$.
Hence, the relevant implication in monitoring calibration
becomes
\begin{align}\label{eq:5p}
	 \E \left[\mV\left(\begin{pmatrix}\widehat{\VaR}_{t,\beta}\\\widehat{\CoVaR}_{kt,\a|\b}\end{pmatrix},\begin{pmatrix} X_t\\Y_{kt}\end{pmatrix}\right) \; \Bigg| \; \mathcal{F}_{t-1} \right]=\vzeros \qquad\text{for all}\ k \in [K], \; \, t\in \mathbb{N}.
\end{align}

The following result shows that the full, joint probabilistic properties of the indicators
\begin{align}
	\label{eqn:Indicators}
	I_{t} := \1_{\{X_t > \widehat{\VaR}_{t,\beta}\}}
	\qquad \text{and} \qquad
	I_{kt} := \1_{\{X_t> \widehat{\VaR}_{t,\beta},\ Y_{kt}> \widehat{\CoVaR}_{kt,\a|\b}\}}
\end{align}
is known under the null hypothesis $H_0^{\CoVaR}$.

\begin{prop}
	\label{prop:CoVaR null}
	Under $H_0^{\CoVaR}$, it holds for all $k \in [K]$ that
	\begin{align}
		\label{eqn:CoVaRH0Law}
		\big\{(I_t,I_{kt})^\prime\big\}_{t\in\mathbb{N}} \overset{d}{=} \big\{(\1_{\{U_{1t}>\beta\}},\1_{\{U_{1t}>\beta,\ U_{2t}>\alpha\}})^\prime\big\}_{t\in \mathbb{N}},
	\end{align}
	where $U_{it}\overset{IID}{\sim}\mathcal{U}[0,1]$ are independent of each other for  $i=1,2$ and all $t \in \mathbb{N}$.
\end{prop}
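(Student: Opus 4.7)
My plan is to exploit the identification-function characterization in~\eqref{eq:5p} to pin down the conditional distribution of $(I_t,I_{kt})'$ given $\mathcal{F}_{t-1}$, and then lift this to the full joint law over $t$ via a standard iterated-conditioning argument. Under $H_0^{\CoVaR}$, equation~\eqref{eq:5p} with the identification function~\eqref{eqn:CoVaRJointID} says that the conditional mean of both components vanishes. The first component gives $\P\{X_t\le \widehat{\VaR}_{t,\beta}\mid \mathcal{F}_{t-1}\}=\beta$, and since $(X_t,Y_{kt})'\mid \mathcal{F}_{t-1}$ has a strictly positive Lebesgue density, the events $\{X_t\le \widehat{\VaR}_{t,\beta}\}$ and $\{X_t< \widehat{\VaR}_{t,\beta}\}$ agree almost surely, hence $\P\{I_t=1\mid \mathcal{F}_{t-1}\}=1-\beta$.

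Next, the second component of~\eqref{eq:5p} rearranges to
\begin{equation*}
\P\bigl\{X_t> \widehat{\VaR}_{t,\beta},\; Y_{kt}\le \widehat{\CoVaR}_{kt,\a|\b}\,\bigm|\, \mathcal{F}_{t-1}\bigr\}=\a\,\P\bigl\{X_t> \widehat{\VaR}_{t,\beta}\bigm|\mathcal{F}_{t-1}\bigr\}=\a(1-\b).
\end{equation*}
Combining with the above, and again using continuity to pass between $\le$ and $<$, I would read off the full conditional law of $(I_t,I_{kt})'$ given $\mathcal{F}_{t-1}$:
\begin{equation*}
(I_t,I_{kt})'=\begin{cases}(0,0)', & \text{with cond.\ prob.\ } \b,\\(1,0)', & \text{with cond.\ prob.\ } \a(1-\b),\\(1,1)', & \text{with cond.\ prob.\ } (1-\a)(1-\b).\end{cases}
\end{equation*}
The case $(0,1)'$ is impossible since $I_{kt}\le I_t$ by construction. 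A direct computation on the right-hand side of~\eqref{eqn:CoVaRH0Law}, using the independence of $U_{1t}$ and $U_{2t}$, shows that $(\1_{\{U_{1t}>\b\}},\1_{\{U_{1t}>\b,U_{2t}>\a\}})'$ has exactly the same three-point law.

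The final step is to promote this per-$t$ equality in law to the stated equality of processes. Since the conditional law of $(I_t,I_{kt})'$ given $\mathcal{F}_{t-1}$ is \emph{deterministic} (it does not depend on $\mathcal{F}_{t-1}$), $(I_t,I_{kt})'$ is independent of $\mathcal{F}_{t-1}$, which in particular contains the $\sigma$-algebra generated by $\{(I_s,I_{ks})'\}_{s<t}$. A standard induction in $t$ using the tower property then yields that $\{(I_t,I_{kt})'\}_{t\in\mathbb{N}}$ is a sequence of IID vectors with the three-point law above, matching the right-hand side of~\eqref{eqn:CoVaRH0Law}. The only subtlety I foresee is careful bookkeeping between the strict and non-strict inequalities appearing in the indicators~\eqref{eqn:Indicators}, the identification function~\eqref{eqn:CoVaRJointID}, and the definition~\eqref{eqn:DefCoVaR} of the CoVaR; all of these are reconciled by the assumed strict positivity of the conditional density, which I would invoke explicitly at each step. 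No step requires new machinery beyond~\eqref{eq:5p} and iterated expectations.
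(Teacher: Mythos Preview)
Your argument is correct. It differs from the paper's proof mainly in packaging rather than substance, but the differences are worth noting.

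The paper pins down the joint law of $(I_t,I_{kt})'$ by separately computing the two marginal means and the contemporaneous covariance, invoking the fact that for a pair of binary random variables these three moments determine the bivariate distribution. You instead exploit the structural constraint $I_{kt}\le I_t$ to reduce the support to three points and read off all three cell probabilities directly from the two identification equations. This is slightly more economical and avoids the detour through the covariance.

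For the temporal independence, the paper writes out explicit factorizations of the form $\P\{I_s=x,\ I_{kt}=y\}=\P\{I_s=x\}\P\{I_{kt}=y\}$ by conditioning on $\mathcal{F}_{\max(s,t)-1}$ and iterating expectations. Your observation that a \emph{deterministic} conditional law of $(I_t,I_{kt})'$ given $\mathcal{F}_{t-1}$ implies independence of $(I_t,I_{kt})'$ from $\mathcal{F}_{t-1}$---and hence from the full past $\{(I_s,I_{ks})'\}_{s<t}$, which is $\mathcal{F}_{t-1}$-measurable under $H_0^{\CoVaR}$---compresses these calculations into a single clean step. The induction you sketch is standard and suffices to upgrade ``each term independent of its past'' to joint independence of the whole sequence. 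Your treatment of the strict/non-strict inequality issue via the density assumption is exactly what is needed.
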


As the full probabilistic structure of $\big\{(I_t,I_{kt})^\prime\big\}_{t\in\mathbb{N}}$ is known for each $k \in [K]$ individually, the result of Proposition~\ref{prop:CoVaR null} can be used for our monitoring procedure to simulate critical values that are valid in a monitoring sense~\eqref{eqn:MonitoringGuarantee} without the need for asymptotic approximations.
Importantly, the binary nature of the CoVaR identification function in \eqref{eqn:CoVaRJointID} facilitates such a treatment, while remaining agnostic about the (conditional) distributions of $X_t$ and $Y_{kt}$.
Since for $K > 1$, Proposition~\ref{prop:CoVaR null} stays silent on the dependence structure between the contemporaneous $I_{kt}$ and $I_{k't}$ with $k \neq k'$, we will exploit a Bonferroni-type correction in Theorem~\ref{thm:CoVaR monitor} below.

For a (prospective) monitoring sample of size $n \in \mathbb{N}$, Proposition~\ref{prop:CoVaR null} leads to the testable implications of $H_0^{\CoVaR}$ that
\begin{equation}
	\label{eqn:Bernoulli}
	I_{t} \overset{\text{IID}}{\sim}\Bern(1-\b) \quad \text{and} \quad I_{kt} \overset{\text{IID}}{\sim}\Bern\big((1-\a)(1-\b)\big), \quad k \in [K],\ t \in [n],
\end{equation}
where $\Bern(\gamma)$ denotes a Bernoulli distribution with success probability $\gamma\in(0,1)$.

\begin{rem}[VaR monitoring for the CoVaR]
	\label{rem:VaRMonitoring}
	In CoVaR monitoring, it is crucial \emph{not} to disregard the VaR indicator $I_t$ in \eqref{eqn:Bernoulli}, even though superficially it only concerns the VaR while not being directly related to the CoVaR.
	To see why, consider probability levels $\a^\prime\neq\a$ and $\b^\prime\neq\b$ satisfying $(1-\a^\prime)(1-\b^\prime)=(1-\a)(1-\b)$. 
	Then, by the same arguments leading to \eqref{eqn:Bernoulli}, $I^\prime_{kt}:=\1_{\{X_t> \VaR_{t,\beta^\prime},\ Y_{kt}> \CoVaR_{kt,\a^\prime|\b^\prime}\}}\overset{\text{IID}}{\sim}\Bern\big((1-\a^\prime)(1-\b^\prime)\big) = \Bern\big((1-\a)(1-\b)\big)$. 
	Therefore, the true $\CoVaR_{kt,\a|\b}$ are indistinguishable from the incorrect $\CoVaR_{kt,\a^\prime|\b^\prime}$, because $I_{kt}$ and $I_{kt}^\prime$ are both IID with the same $\Bern\big((1-\a)(1-\b)\big)$-distribution. This would allow the possibility of completely misspecified CoVaR forecasts not being detected by a monitoring procedure that disregards $I_t$.
	Thus, it is vital to also monitor for VaR calibration via $I_{t} \overset{\text{IID}}{\sim}\Bern(1-\b)\neq\Bern(1-\b^\prime)$.
\end{rem}

We now explain the construction of a VaR detector that monitors $I_{t} \overset{\text{IID}}{\sim}\Bern(1-\b)$ from \eqref{eqn:Bernoulli}, and continue with a CoVaR detector below.
For monitoring the VaR-specific sequence $I_t$, we adapt the method of \citet{HD22a+}
and use a moving sum (MOSUM) detector inspired by the one-shot backtest of \citet{KW15} of the form
\begin{equation}
	\label{eq:VaR det}
	\VaR(T)=a\VaR^{uc}(T) + (1-a)\VaR^{iid}(T),
\end{equation}
where $T \in \{m,\dots,n\}$ and $a\in(0,1)$ with canonical choice $a = 0.5$. 
The constant $a$ is a user-specified parameter that allows to vary the sensitivity of the procedure towards violations of correct unconditional calibration (i.e., $\E[I_t]=1-\b$) and violations of IIDness of the $I_t$. These two properties are assessed by $\VaR^{uc}(T)$ and $\VaR^{iid}(T)$, respectively. 
The length of the rolling monitoring window is denoted by $m\in\mathbb{N}$, but the dependence of $\VaR(T)$ on $m$ is not made explicit for notational brevity.

For $\VaR^{uc}(T)$ in \eqref{eq:VaR det}, we choose the standardized detector
\begin{align}
	\label{eqn:VaRDetStandardization}
	\VaR^{uc}(T) = \frac{V_T- \E_{H_0^{\CoVaR}}[V_T] }{ \sqrt{\Var_{H_0^{\CoVaR}}(V_T) }}
	\quad \text{with} \quad 
	V_T = \bigg|\frac{1}{m}\sum_{t=T-m+1}^{T}I_{t}-(1-\b)\bigg|,\qquad T\geq m.
\end{align}
The mean, $\E_{H_0^{\CoVaR}}[V_T]$, and variance, $\Var_{H_0^{\CoVaR}}(V_T)$, are obtained from simulations as the full probabilistic structure of $I_t$ is known under $H_0^{\CoVaR}$; see \eqref{eqn:Bernoulli}.
The standardization in $\VaR^{uc}(T)$ differs from \citet{HD22a+} and is used to better balance the sensitivity of the two detectors in \eqref{eq:VaR det}.
In contrast to $\E_{H_0^{\CoVaR}}[V_T]$ and $\Var_{H_0^{\CoVaR}}(V_T)$, the sequence $V_T$ has to be computed from the data (i.e., observations and appertaining forecasts) and is responsible for giving the procedure its power.
Specifically, $V_T$ is designed to uncover deviations from $\E[I_t]=1-\b$, i.e., correct unconditional calibration.
By construction, large values of $\VaR^{uc}(T)$ provide evidence against the null, where deviations in both directions (i.e., $\E[I_t]>1-\b$ and $\E[I_t]<1-\b$) raise an alarm.\footnote{
	One could also consider one-sided detectors here. However, we refrain from doing so, because it remains unclear how a too conservative (and, hence, classically acceptable) misspecification of VaR forecasts interact with the validity of the CoVaR forecasts in light of the joint identification function in \eqref{eqn:CoVaRJointID}; also see \citet[Remark~2]{WWZ23}.
}

For $\VaR^{iid}(T)$ in \eqref{eq:VaR det}, we consider the durations $d_i:=t_i-t_{i-1}$ ($t_0:=T-m$) between VaR violation times in the window $\{T-m+1,\ldots,T\}$.
These are defined as
\begin{align*}
	\{t_1,\ldots,t_{S_{T}}\} =\big\{t\in\{T-m+1,\ldots,T\}\ :\ I_{t}=1\big\},
\end{align*}
where $S_{T} := \sum_{t=T-m+1}^{T}I_t$ counts the number of VaR violations in the window $\{T-m+1,\ldots,T\}$.
As a measure of the ``inequality'' between the durations $d_i$, we follow \citet{KW15} and use the Gini coefficient 
\begin{align*}
	g_{T} &= \frac{\frac{1}{S_{T}^2}\sum_{i,j=1}^{S_{T}}|d_i-d_j|}{2\overline{d}},
	\qquad \text{with} \qquad
	 \overline{d}=\frac{1}{S_{T}}\sum_{i=1}^{S_{T}}d_i.
\end{align*}
The detector for the second part of \eqref{eq:VaR det} then is the standardized Gini coefficient
\begin{align*}
	\VaR^{iid}(T) = \frac{g_{T} -  \E_{H_0^{\CoVaR}}[g_T]}{ \sqrt{\Var_{H_0^{\CoVaR}}(g_T) }},
\end{align*}
where $\E_{H_0^{\CoVaR}}[g_T]$ and $\Var_{H_0^{\CoVaR}}(g_T)$ are the mean and variance, respectively, of $g_T$ under the null.
As above, these moments are derived from Monte Carlo simulations, and it is only $g_T$ itself that is computed from the data.

The idea underlying the use of $\VaR^{iid}(T)$ is to test the IID~property of $I_t$ by considering the spacing between the VaR exceedances (i.e., those $t$ for which $I_t=1$). Unevenly spaced exceedances with large values for the Gini coefficient (and, hence, large values of $\VaR^{iid}(T)$) weigh against IIDness. 
Note that in principle also too evenly spaced exceedances (with small values for the Gini coefficient) provide evidence against $H_0^{\CoVaR}$. 
However, such a violation of the null is of no concern in risk management, where only a clustering of exceedances is seen as problematic \citep[see, e.g.,][Fig.~1]{KW15}.

To also monitor the CoVaR forecasts, i.e., to sequentially test $I_{kt}\overset{\text{IID}}{\sim}\Bern\big((1-\a)(1-\b)\big)$ from \eqref{eqn:Bernoulli}, we use the same strategy as for monitoring VaR predictions. 
Therefore, we define the CoVaR detectors $\CoVaR_{k}(T)$ for any $k \in [K]$ similarly as the VaR detector $\VaR(T)$:
\begin{align}
	\label{eq:CoVaR det}
	\CoVaR_{k}(T)=a\CoVaR^{uc}_{k}(T) + (1-a)\CoVaR^{iid}_{k}(T).
\end{align}
Here, the right-hand side quantities are defined in analogy to those in \eqref{eq:VaR det}, with the only difference that $I_{kt}$ replaces $I_t$, and $(1-\a)(1-\b)$ replaces $1-\b$ at every occurrence. 

The following theorem shows how the probability of a false detection in the sense of \eqref{eqn:MonitoringGuarantee} can be bounded uniformly over time and over different institutions $k \in [K]$, for which the CoVaR is forecasted.

\begin{thm}
	\label{thm:CoVaR monitor}
	For any significance level $\iota \in (0,1)$, it holds that
	\begin{multline}\label{eqn:CoVaRMonitoringTypeIError}
		\P_{H_0^{\CoVaR}} \Big\{ \exists T \in \{m, \dots, n\}: \quad \VaR(T) \ge v  \\ \text{  or  }  \;   \CoVaR_{k}(T) \ge c_k \quad \text{for some $k \in [K]$}  \Big\} \le \iota,
	\end{multline}
	if the critical values $v$ and the $c_k$'s are chosen such, that
	\begin{multline}
		\P_{H_0^{\CoVaR}}\Big\{\sup_{T=m,\ldots,n}\VaR(T) \geq v \Big\} + \sum_{k=1}^{K}\P_{H_0^{\CoVaR}}\Big\{\sup_{T=m,\ldots,n}\CoVaR_{k}(T) \geq c_k\Big\} \\
		- \sum_{k=1}^{K}\P_{H_0^{\CoVaR}}\Big\{\sup_{T=m,\ldots,n}\VaR(T) \geq v,\ \sup_{T=m,\ldots,n}\CoVaR_{k}(T) \geq c_k\Big\}=\iota.
		\label{eq:size ineq}
	\end{multline}
\end{thm}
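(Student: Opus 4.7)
The plan is to establish \eqref{eqn:CoVaRMonitoringTypeIError} by a refined Bonferroni-type union bound whose individual terms depend only on the marginal and pairwise distributions that are explicitly characterised under $H_0^{\CoVaR}$ by Proposition~\ref{prop:CoVaR null}. First I would rewrite the monitoring rejection event in set-theoretic form: setting $A_0 = \bigl\{\sup_{T=m,\ldots,n}\VaR(T) \geq v\bigr\}$ and $A_k = \bigl\{\sup_{T=m,\ldots,n}\CoVaR_{k}(T) \geq c_k\bigr\}$ for $k \in [K]$, the event whose probability appears on the left-hand side of \eqref{eqn:CoVaRMonitoringTypeIError} is exactly $A_0 \cup \bigcup_{k=1}^K A_k$.

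The core step is the identity $A_0 \cup \bigcup_{k=1}^K A_k = A_0 \sqcup \bigcup_{k=1}^K (A_k \setminus A_0)$, in which $A_0$ is disjoint from the remaining union. Combined with finite subadditivity and the elementary identity $\P(A_k \setminus A_0) = \P(A_k) - \P(A_k \cap A_0)$, this yields
\[
	\P_{H_0^{\CoVaR}}\!\left(A_0 \cup \bigcup_{k=1}^K A_k\right) \leq \P_{H_0^{\CoVaR}}(A_0) + \sum_{k=1}^K \bigl[\P_{H_0^{\CoVaR}}(A_k) - \P_{H_0^{\CoVaR}}(A_k \cap A_0)\bigr],
\]
and the right-hand side equals $\iota$ by the defining equation \eqref{eq:size ineq} for $v$ and $c_1,\ldots,c_K$. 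This proves \eqref{eqn:CoVaRMonitoringTypeIError}.

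The main obstacle is not algebraic but distributional: a full inclusion--exclusion would introduce cross terms $\P_{H_0^{\CoVaR}}(A_k \cap A_{k'})$ with $k\neq k'$, which depend on the joint law of $(I_{kt},I_{k't})$ across institutions---a law that Proposition~\ref{prop:CoVaR null} does \emph{not} pin down. The decomposition above is tailored precisely to avoid such cross terms while still improving on the crude Bonferroni bound $\sum_{k=0}^K \P_{H_0^{\CoVaR}}(A_k)$: only intersections with $A_0$ appear, and these depend solely on the bivariate law of $(I_t,I_{kt})$ for each fixed $k$, which \emph{is} fully characterised by \eqref{eqn:CoVaRH0Law} and can therefore be obtained by Monte Carlo simulation.
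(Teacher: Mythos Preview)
Your proof is correct and essentially identical to the paper's: both isolate the $\VaR$ event first via $\P(A_0\cup\bigcup_k A_k)=\P(A_0)+\P\bigl(\bigcup_k A_k\setminus A_0\bigr)$, apply Boole's inequality to the remaining union, and then use $\P(A_k\setminus A_0)=\P(A_k)-\P(A_k\cap A_0)$ to arrive at the right-hand side of \eqref{eq:size ineq}. Your additional remark that this decomposition is chosen precisely to avoid the cross-institution terms $\P(A_k\cap A_{k'})$ not pinned down by Proposition~\ref{prop:CoVaR null} mirrors the paper's post-theorem discussion.
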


In other words, Theorem~\ref{thm:CoVaR monitor} shows that size at level $\iota\in(0,1)$ is controlled \textit{in finite samples} if we reject $H_0^{\CoVaR}$ as soon as either
	\begin{equation*}
		\VaR(T) \geq v \quad\text{or}\quad \CoVaR_{k}(T) \geq c_k\qquad\text{for some}\ T=m,\ldots,n,\quad k \in [K].
	\end{equation*}

The two probabilities in the upper row of \eqref{eq:size ineq} are associated with a standard Bonferroni correction for testing the $K+1$ hypotheses of calibration of $\widehat{\VaR}_{t,\b}$ and $\widehat{\CoVaR}_{kt,\a|\b}$ for $k \in [K]$. 
However, the presence of the \textit{negative} third term on the right-hand side of \eqref{eq:size ineq} allows for a less conservative---and, hence, potentially more powerful---procedure.
We mention that there may not exist $v$ and $c_k$, such that the equality in \eqref{eq:size ineq} holds exactly, because of the binary nature of the detectors.
Therefore, in practice $v$ and $c_k$ should be chosen to lead to a sum that is smaller but as close as possible to $\iota$; see also step~\ref{it:final} in Algorithm~\ref{algo:1} below.

For the practical computation of the critical values $v$ and $c_k$, we use the following Algorithm~\ref{algo:1}, which approximates the probabilities in \eqref{eq:size ineq} by using the probabilistic structure in \eqref{eqn:CoVaRH0Law} to sample under the null.

\begin{algorithm}
	\label{algo:1}
	To compute critical values for Theorem~\ref{thm:CoVaR monitor} proceed as follows:
	\begin{enumerate}
		\item
		\label{it:1} 
		Generate a large number $B$ of mutually independent samples $\big\{U_{1t}\overset{\text{IID}}{\sim}\mathcal{U}[0,1]\big\}_{t \in [n]}$ and $\big\{U_{2t}\overset{\text{IID}}{\sim}\mathcal{U}[0,1]\big\}_{t \in [n]}$.
		
		\item
		\label{it:2} 
		\sloppy
		Compute the $B$ sequences $\big\{I_{t}^\ast=\1_{\{U_{1t} > \b\}}\big\}_{t \in [n]}$ and $\big\{I_{1t}^{\ast}=\1_{\{U_{1t} > \b,\ U_{2t}>\a \}}\big\}_{t \in [n]}$.
		
		\item\label{it:3} 
		For $b=1,\ldots,B$, calculate: 
		\begin{itemize}
			\item[i.] 
			$\sup_{T=m,\ldots,n}\VaR^{b}(T)$, where $\VaR^{b}(T)$ is defined as $\VaR(T)$, except that $\{I_t\}$ is replaced by the $b$-th sample from $\{I_t^{\ast}\}$ from step \ref{it:2} of this algorithm;
			
			\item[ii.] 
			$\sup_{T=m,\ldots,n}\CoVaR^{b}(T)$, where $\CoVaR^{b}(T)$ is defined as $\CoVaR_{1}(T)$, except that $\{I_{1t}\}$ is replaced by the $b$-th sample from $\{I_{1t}^{\ast}\}$ from step \ref{it:2} of this algorithm.
		\end{itemize}
		
		\item On a fine grid for $\nu\in[0,1]$, compute the empirical $(1-\nu)$-quantiles of 
		\begin{itemize}
			\item[i.] $\big\{\sup_{T=m,\ldots,n}\VaR^{b}(T)\big\}_{b \in [B]}$, which we denote by $v(\nu)$, and 
			\item[ii.] $\big\{\sup_{T=m,\ldots,n}\CoVaR^{b}(T)\big\}_{b \in [B]}$, which we denote by $c(\nu)$.
		\end{itemize}
		
		\item\label{it:final} Find the value $\nu \in [0,1]$ for which
		\begin{multline*}
			\frac{1}{B}\sum_{b=1}^{B}\1_{\big\{\sup_{T=m,\ldots,n} \VaR^{b}(T)\geq v(\nu) \big\}} + \frac{K}{B}\sum_{b=1}^{B}\1_{\big\{\sup_{T=m,\ldots,n} \CoVaR^{b}(T)\geq  c(\nu)\big\}} \\
			- \frac{K}{B}\sum_{b=1}^{B} \1_{\big\{\sup_{T=m,\ldots,n}\VaR^{b}(T)\geq v(\nu),\ \sup_{T=m,\ldots,n}\CoVaR^{b}(T)\geq c(\nu)\big\}}
		\end{multline*}
		is equal to (or smaller than) $\iota$; cf.~\eqref{eq:size ineq}. The appertaining critical values will be denoted by
		$v^{\CoVaR}$ and $c^{\CoVaR}$.

	\end{enumerate}
\end{algorithm}

The above algorithm also overcomes the theoretical ambiguity that there exists a continuum of critical values $v$ and $c_k$ that ensure the probabilities in \eqref{eq:size ineq} sum to $\iota$. 
In principle, such a continuum allows to weight the importance of the VaR and CoVaR hypotheses.
However, the fourth step in Algorithm~\ref{algo:1} advocates a natural approach that balances the importance of VaR and CoVaR by choosing $v$ and $c_k$ to correspond to some $(1-\nu)$-quantile of the variables $\sup_{T=m,\ldots,n}\VaR(T)$ and $\sup_{T=m,\ldots,n}\CoVaR_{k}(T)$, respectively.
Then, $\nu$ in step~5 simply has to be chosen to ensure that the probabilities in \eqref{eq:size ineq} sum to $\iota$.
Note that this entails the computation of two critical values only---one for the VaR (written $v^{\CoVaR}$) and one for the CoVaR (written $c^{\CoVaR}$), where the latter one is the same for all $k \in [K]$, because the distribution of $I_{kt}\overset{\text{IID}}{\sim}\Bern\big((1-\a)(1-\b)\big)$ is independent of $k$.
These critical values have the superscript ``${\CoVaR}$'' to distinguish them from critical values used for monitoring different systemic risk measures.

Clearly, the use of Boole's inequality for $K > 1$ (in the proof of Theorem~\ref{thm:CoVaR monitor}) implies that our monitoring procedure is conservative, indicated by the \emph{inequality} in  \eqref{eqn:CoVaRMonitoringTypeIError}.
Only for $K=1$, this becomes an equality such that size is kept exactly.
While a conservative test reduces the probability of actually making a type I error, this usually has the drawback of increasing the probability of a type II error, such that the ability to identify (systemic) risk changes is reduced. 
However, we show in simulations in Section~\ref{Simulations} that our procedure is not too conservative for values of $K$ representative of practical applications.

A possible reason for the good performance of the Bonferroni correction is explained by \citet{Hol79}, who states that: ``The power gain obtained by using a sequentially rejective Bonferroni test [based on ordering $p$-values] instead of a classical Bonferroni test depends very much upon the alternative. 
It is small if all the hypotheses are `almost true', but it may be considerable if a number of hypotheses are `completely wrong'.'' 
In our case, systemic risk builds up slowly for all banks, rendering all hypotheses 'almost true' at first, thus leading to good detection properties of our simple Bonferroni-type corrections.

The use of the Bonferroni-type correction has two further advantages.
First, it ensures controlled size in finite samples.
In particular, we do not have to rely on large-sample asymptotics, which may be unreliable in backtesting contexts \citep{Hog19a+}.
Second, the Bonferroni correction allows us to attribute a rejection of the null to a specific institution.
For instance, if the detector $\CoVaR_{k^{\ast}}$ for some $k^\ast \in [K]$ is the first to raise an alarm, the rejection of the null can be pinpointed to bank $k^\ast$. 
Of course, precisely identifying the most vulnerable institution is of utmost importance in systemic risk analysis.\footnote{ 
	A further implication of $H_0^{\CoVaR}$ is that $I_s$ and $I_{kt}$ are independent for $s\neq t$. We do not explicitly reflect this implication in the construction of our detectors. While doing so may increase the power of the detector (depending, of course, on the specific type of alternative), this would again render infeasible the attribution of a rejection of $H_0^{\CoVaR}$ to a specific institution.
}

\begin{rem}[Flexibility of detector specification]
	Other CoVaR monitoring detectors $\VaR(T)$ and $\CoVaR_k(T)$ depending solely on $\big\{(I_t,I_{kt})^\prime\big\}_{t\in [n]}$ can be constructed that preserve the accumulated type I error control in \eqref{eqn:MonitoringGuarantee}; compare Algorithm~\ref{algo:1}.
	This flexibility enables the construction of detectors specifically tailored to the particular form of misspecification one aims to detect.
	The same observation applies to the monitoring procedures for the RCoVaR in Section~\ref{ReversedCoVaRMonitoring}, and CoES and MES in Appendix~\ref{sex:CoESandMES}.
\end{rem}

\subsection{CoVaR Monitoring with Reversed Conditioning}
\label{ReversedCoVaRMonitoring}

Section~\ref{CoVaR Monitoring} focuses on a monitoring procedure for the CoVaR in \eqref{eqn:DefCoVaR}, which measures the individual exposure of (e.g., financial institution) $Y_{kt}$ onto the joint (market) $X_t$.
Now, we consider monitoring of the RCoVaR in \eqref{eqn:DefRevCoVaR}, which measures the effect the individual (e.g., financial institution) $Y_{kt}$ has on the joint (market) $X_t$.

Similarly as above, for given VaR forecasts $\widehat{\VaR}_{kt,\beta}$ and RCoVaR forecasts $\widehat{\RCoVaR}_{kt,\a|\b}$, the null is
\begin{equation*}
	H_0^{\RCoVaR} \colon \widehat{\VaR}_{kt,\b} = \VaR_{kt,\b}\qquad	\text{and} \qquad
	\widehat{\RCoVaR}_{kt,\a|\b} = \RCoVaR_{kt,\a|\b}, \quad k \in [K], \quad t \in \mathbb{N}.
\end{equation*}

Employing analogous arguments as in Section~\ref{CoVaR Monitoring}, we rely on a testable implication of $H_0^{\RCoVaR}$ that is based on the indicators $I_{kt}^{v} := \1_{\{Y_{kt}> \widehat{\VaR}_{kt,\beta}\}}$ and $I_{kt}^{c} := \1_{\{Y_{kt}> \widehat{\VaR}_{kt,\beta},\ X_t> \widehat{\RCoVaR}_{kt,\a|\b}\}}$.
Specifically, we have the following analog to Proposition~\ref{prop:CoVaR null}.

\begin{prop}
	\label{prop:CoVaR rev null}
	Under $H_0^{\RCoVaR}$, it holds for all $k \in [K]$ that
	\[
		\big\{(I_{kt}^v,I_{kt}^c)^\prime\big\}_{t\in\mathbb{N}} \overset{d}{=} \big\{(\1_{\{U_{1t}>\beta\}},\1_{\{U_{1t}>\beta,\ U_{2t}>\alpha\}})^\prime\big\}_{t\in\mathbb{N}},
	\]
	where $U_{it}\overset{IID}{\sim}\mathcal{U}[0,1]$ are independent of each other for  $i=1,2$ and all $t \in \mathbb{N}$.
\end{prop}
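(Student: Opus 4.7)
The plan is to reduce Proposition~\ref{prop:CoVaR rev null} to Proposition~\ref{prop:CoVaR null} by a coordinate swap that exchanges the reference position and the individual position. For each fixed $k \in [K]$, I would define the swapped bivariate process $(\tilde X_t, \tilde Y_t)' := (Y_{kt}, X_t)'$ and keep the same filtration $\{\mathcal{F}_t\}$. Since the regularity hypothesis that $(X_t, Y_{kt})'\mid\mathcal{F}_{t-1}$ admits a strictly positive Lebesgue density on $\{(x,y)\colon F_{(X_t,Y_{kt})'\mid\mathcal{F}_{t-1}}(x,y)\in(0,1)\}$ is symmetric under the permutation of the two coordinates, the swapped pair $(\tilde X_t, \tilde Y_t)'\mid\mathcal{F}_{t-1}$ satisfies the identical density condition. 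In particular, all relevant inverse CDFs exist, so the indicators below are well-defined a.s.

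Next, I would identify the risk measures in the swapped coordinates:
\[
\VaR_{kt,\beta} = F_{\tilde X_t\mid\mathcal{F}_{t-1}}^{-1}(\beta),\qquad \RCoVaR_{kt,\alpha\mid\beta} = F_{\tilde Y_t\mid \tilde X_t\ge \VaR_{kt,\beta},\mathcal{F}_{t-1}}^{-1}(\alpha).
\]
These are precisely the (level-$\beta$) VaR of the ``reference'' variable $\tilde X_t$ and the (level-$(\alpha,\beta)$) CoVaR of the ``individual'' variable $\tilde Y_t$ in the swapped $K=1$ formulation. Thus $H_0^{\RCoVaR}$ for the original process translates verbatim into $H_0^{\CoVaR}$ for the swapped process, and the indicators
\[
I_{kt}^v = \1_{\{\tilde X_t > \widehat{\VaR}_{kt,\beta}\}},\qquad I_{kt}^c = \1_{\{\tilde X_t > \widehat{\VaR}_{kt,\beta},\ \tilde Y_t > \widehat{\RCoVaR}_{kt,\alpha\mid\beta}\}}
\]
coincide with the $I_t$ and $I_{1t}$ appearing in Proposition~\ref{prop:CoVaR null} applied to the swapped process. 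Invoking that proposition with $K=1$ then delivers the claimed equality in distribution, and since the reduction is performed separately for every $k \in [K]$ -- matching the quantifier ``for all $k \in [K]$'' in Proposition~\ref{prop:CoVaR null} -- no cross-$k$ dependence ever needs to be discussed.

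There is essentially no hard step beyond checking that the density assumption is invariant under the coordinate swap, which is immediate. Should one instead wish to give a direct proof bypassing Proposition~\ref{prop:CoVaR null}, the delicate points would be (i) applying the probability integral transform to $F_{Y_{kt}\mid\mathcal{F}_{t-1}}$ to obtain $U_{1t}:=F_{Y_{kt}\mid\mathcal{F}_{t-1}}(Y_{kt})\sim\mathcal U[0,1]$ with $I_{kt}^v = \1_{\{U_{1t}>\beta\}}$, (ii) applying a second PIT to the conditional distribution $F_{X_t\mid Y_{kt}\ge \VaR_{kt,\beta},\mathcal{F}_{t-1}}$ (which is continuous thanks to the density assumption) to obtain a uniform $\tilde U_{2t}$ on the event $\{Y_{kt}\ge\VaR_{kt,\beta}\}$ -- extended, if desired, to an independent uniform $U_{2t}$ on the full space -- and (iii) verifying IIDness of $\{(I_{kt}^v,I_{kt}^c)'\}_{t\in\mathbb{N}}$ via the tower property after observing that the conditional law of $(I_{kt}^v,I_{kt}^c)'$ given $\mathcal{F}_{t-1}$ is free of $\mathcal{F}_{t-1}$, assigning masses $\beta$, $(1-\beta)\alpha$, and $(1-\beta)(1-\alpha)$ to $(0,0)$, $(1,0)$, and $(1,1)$, respectively. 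The slick symmetry route above avoids all three of these checks at once.
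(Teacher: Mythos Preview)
Your proposal is correct and matches the paper's own treatment: the paper simply states that the proof ``is similar to that of Proposition~\ref{prop:CoVaR null} and, hence, omitted.'' Your coordinate-swap reduction makes this remark precise by showing that, for each fixed $k$, the reverse setup is exactly an instance of the original CoVaR setup with $K=1$, so Proposition~\ref{prop:CoVaR null} applies verbatim; this is cleaner than rerunning the four-step argument of that proof but amounts to the same thing.
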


The proof is similar to that of Proposition~\ref{prop:CoVaR null} and, hence, omitted.
Proposition~\ref{prop:CoVaR rev null} suggests the following surveillance approach.
VaR changes may be monitored via the detector $\VaR_{k}(T)$, which is defined in analogy to $\VaR(T)$ at \eqref{eq:VaR det} with $I_t$ replaced by $I_{kt}^{v}$.
Similarly, CoVaR surveillance now uses the detector $\RCoVaR_{k}(T)$, where the definition is identical to that of $\CoVaR_{k}(T)$ at \eqref{eq:CoVaR det} except that $I_{kt}^{c}$ is used in place of $I_{kt}$.
\color{black}

\begin{thm}
	\label{thm:CoVaR rev monitor}
	For any significance level $\iota \in (0,1)$, it holds that
	\begin{multline}
		\label{eqn:RevCoVaRMonitoringTypeIError}
		\P_{H_0^{\RCoVaR}} \Big\{ \exists T \in \{m,\ldots,n\}: \quad \VaR_{k}(T) \ge v_k  \\
		\text{  or  }   \RCoVaR_{k}(T) \ge c_k \quad \text{for some $k \in [K]$}  \Big\} \le \iota,
	\end{multline}
	if the critical values $v_k$ and $ c_k$ are chosen such, that
	\begin{equation}
		\label{eq:size ineq rev}
		\sum_{k=1}^{K}\P_{H_0^{\RCoVaR}}\bigg\{\Big\{\sup_{T=m,\ldots,n}\VaR_{k}(T) \geq v_k \Big\}\cup\Big\{\sup_{T=m,\ldots,n}\RCoVaR_{k}(T) \geq c_k \Big\}\bigg\}=\iota.
	\end{equation}
\end{thm}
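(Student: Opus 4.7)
The plan is to mirror the proof strategy of Theorem~\ref{thm:CoVaR monitor}, while exploiting the structural difference that under reversed conditioning each institution $k \in [K]$ carries its own pair of indicators $(I_{kt}^v,I_{kt}^c)'$, with no VaR component shared across different $k$. Consequently, a pure Bonferroni-type union bound over $k$ suffices, and no inclusion-exclusion correction analogous to the negative cross-term in \eqref{eq:size ineq} appears.

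First, I rewrite the monitoring event in \eqref{eqn:RevCoVaRMonitoringTypeIError} as a union across institutions,
\begin{equation*}
\bigcup_{k=1}^{K} E_k, \qquad E_k := \Big\{\sup_{T=m,\ldots,n}\VaR_{k}(T) \ge v_k\Big\} \cup \Big\{\sup_{T=m,\ldots,n}\RCoVaR_{k}(T) \ge c_k\Big\},
\end{equation*}
using the equivalence between the existential quantifier over $T \in \{m,\ldots,n\}$ and the supremum attaining the critical value. Second, I apply Boole's inequality to obtain
\begin{equation*}
\P_{H_0^{\RCoVaR}}\bigg(\bigcup_{k=1}^{K} E_k\bigg) \le \sum_{k=1}^{K} \P_{H_0^{\RCoVaR}}(E_k),
\end{equation*}
which by the assumed identity \eqref{eq:size ineq rev} equals $\iota$, completing the argument.

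The only substantive step is ensuring that each summand $\P_{H_0^{\RCoVaR}}(E_k)$ is well-defined and simulatable in practice; this is guaranteed by Proposition~\ref{prop:CoVaR rev null}, which fully characterizes the joint law of $\{(I_{kt}^v,I_{kt}^c)'\}_{t \in [n]}$ and whose distribution depends neither on the data-generating model for $(X_t,Y_{kt})'$ nor on any nuisance parameters. Since the detectors $\VaR_k(T)$ and $\RCoVaR_k(T)$ are measurable functions of these indicators within the rolling window, $\P_{H_0^{\RCoVaR}}(E_k)$ can be computed by a straightforward adaptation of Algorithm~\ref{algo:1}. I do not anticipate any genuine obstacle; the argument reduces essentially to a two-line union bound. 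The only delicacy worth flagging is that, due to the discreteness of the detectors, $v_k$ and $c_k$ can typically only be chosen so the sum is as close as possible to (but not exceeding) $\iota$, exactly as in step~\ref{it:final} of Algorithm~\ref{algo:1}; hence in any implementation \eqref{eq:size ineq rev} is enforced only as a near-equality.
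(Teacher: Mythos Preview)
Your proposal is correct and follows essentially the same approach as the paper: rewrite the rejection event as the union $\bigcup_{k=1}^{K}E_k$, apply Boole's inequality to bound the probability by $\sum_{k=1}^{K}\P_{H_0^{\RCoVaR}}(E_k)$, and conclude via \eqref{eq:size ineq rev}. Your additional remarks on the role of Proposition~\ref{prop:CoVaR rev null} for computability and on the discreteness caveat are accurate but go beyond what the paper's proof records.
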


As in Section~\ref{CoVaR Monitoring}, almost the full probabilistic structure of the detectors is known under $H_0^{\RCoVaR}$, such that the probabilities in \eqref{eq:size ineq rev} and, thereby, the critical values $v_k$ and $c_k$ can be computed explicitly via simulations.
For this, we employ Algorithm~\ref{algo:RevCoVaR} in Appendix~\ref{sec:Algorithms}, which is very similar to Algorithm~\ref{algo:1}, but takes into account the differences between \eqref{eq:size ineq} and \eqref{eq:size ineq rev}.
Again, Algorithm~\ref{algo:RevCoVaR} only computes two critical values---one for the VaR component ($v^{\RCoVaR}$), and one for the RCoVaR component ($c^{\RCoVaR}$).

As in Theorem~\ref{thm:CoVaR monitor}, our procedure is less conservative than a standard Bonferroni correction.
Recall that a Bonferroni correction suggests monitoring each of the $2K$ hypotheses at a significance level of $\iota/(2K)$, such that the total probability of rejection under the null is bounded from above by $\iota$.
In contrast, our monitoring scheme based on Theorem~\ref{thm:CoVaR rev monitor} is less conservative because the sum in \eqref{eq:size ineq rev} is less than the ``Bonferroni sum'', i.e.,
\begin{multline*}
	\sum_{k=1}^{K}\P_{H_0^{\RCoVaR}}\bigg\{\Big\{\sup_{T=m,\ldots,n}\VaR_{k}(T) \geq v_k \Big\}\cup\Big\{\sup_{T=m,\ldots,n}\RCoVaR_{k}(T) \geq c_k \Big\}\bigg\}\\
	\leq \sum_{k=1}^{K}\bigg[\P_{H_0^{\RCoVaR}}\Big\{\sup_{T=m,\ldots,n}\VaR_{k}(T) \geq v_k\Big\}+\P_{H_0^{\RCoVaR}}\Big\{\sup_{T=m,\ldots,n}\RCoVaR_{k}(T) \geq c_k \Big\}\bigg].
\end{multline*}

\begin{rem}[Model-estimation error]
	\label{rem:ModelEstimation}
	We intentionally exclude model-estimation error from the uncertainty quantification of our detectors. Consequently, we treat the forecasts as the outcome of a particular modeling choice---including its estimation---and therefore evaluate the forecasting model and the estimation method jointly as in \citet{DM95}, \citet{GW06} and \citet{RS19}. In this sense, the risk forecasts are regarded as the final product that must satisfy the respective null hypotheses, irrespective of the underlying model and its estimation. This approach parallels \citet{HD22a+}, who provide several arguments in favor of this practice for monitoring risk forecasts, all of which carry over directly to the present setting.
	See Figure~\ref{fig:PowerEstWindowCoVaR} for an analysis of our procedures' sensitivity to model-estimation error.
\end{rem}

\section{Simulations}
\label{Simulations}

Here, we investigate size and power of our monitoring procedures. 
We place a particular emphasis on studying size, because Bonferroni-type corrections---such as those underlying our monitoring schemes---are known to be conservative when many hypotheses are tested.

\subsection{The DCC--GARCH Data-Generating Process}
\label{Data-Generating Process}

We simulate the variables $\big\{(X_t,Y_{1t},\ldots,Y_{Kt})^\prime\big\}_{t=-E+1,\ldots,n}$ from a DCC--GARCH model of \citet{Eng02}.
The sequence consists of $E \in \mathbb{N}$ time points for model estimation (used in one simulation setting), and $n \in \mathbb{N}$ time points for generating forecasts and their evaluation.
The data-generating process (DGP) is
\begin{equation*}
	\mW_t:=(X_t,Y_{1t},\ldots,Y_{Kt})^\prime=\mD_t \vepsi_t,\qquad\vepsi_t\mid\mathcal{F}_{t-1}\sim t_{\nu}(\mR_t),
\end{equation*}
where $\mD_t^2$ is the $\mathcal{F}_{t-1}$-measurable diagonal matrix containing the componentwise conditional variances of $\mW_t\mid\mathcal{F}_{t-1}$, the matrix $\mR_t$ is the conditional correlation of $\mW_t\mid\mathcal{F}_{t-1}$, and $t_{\nu}(\mR_t)$ denotes the multivariate $t$-distribution with degrees of freedom equal to $\nu>2$, correlation matrix $\mR_t$ and marginals standardized to have zero mean and unit variance. 
Therefore, the conditional variance-covariance matrix of $\mW_t\mid\mathcal{F}_{t-1}$ is $\mH_t=\mD_t\mR_t\mD_t$. The matrices $\mD_t$ and $\mR_t$ are modeled in the DCC--GARCH fashion of \citet{Eng02} as
\begin{align*}
	\mD_t^2 &= \diag(\vomega_{G}) + \diag(\valpha_{G})\circ \mW_{t-1}\mW_{t-1}^{\prime} + \diag(\vbeta_{G})\circ\mD_{t-1}^2,\\
	\mQ_t &= \overline{\mQ}(1-\alpha_Q-\beta_Q) + \alpha_Q \vepsi_{t-1}\vepsi_{t-1}^\prime + \beta_Q \mQ_{t-1},\\
	\mR_t &= \diag(\mQ_t)^{-1/2}\mQ_t\diag(\mQ_t)^{-1/2},
\end{align*}
where $\overline{\mQ}$ is the unconditional correlation matrix of the ``devolatilized'' series $\vepsi_{t-1}$, and  $\diag(\va)$ and  $\diag(\mA)$ denote the diagonal matrices containing the elements of the vector $\va$ and the diagonal elements of the matrix $\mA$, respectively.
Note that both the conditional variance matrix $\mD_t^2$ and the conditional correlation matrix $\mR_t$ evolve in a GARCH-type fashion (the latter through the recursion for $\mQ_t$).

We parametrize the process as follows:
We choose $\overline{\mQ}$ to be the matrix with ones on the main diagonal and 0.5 elsewhere (i.e., an equicorrelation matrix), such that the unconditional correlation between any two elements of $\vepsi_t$ equals 0.5.
Furthermore, we fix the parameters $\nu=5$, $\vomega_G=(0.1,\ldots, 0.1)^\prime$, $\valpha_G=(0.1,\ldots,0.1)^\prime$, and $\alpha_{Q}=0.1$.
		In contrast, the autoregressive parameters $\vbeta_G=(\beta_{G,1},\ldots,\beta_{G,K+1})^\prime$ and $\beta_{Q}$ vary over time. 
		Specifically,
		\begin{align}
			\label{eqn:SimParamMisspec}
			\beta_{G,i} = \beta_{G,i,t} =\begin{cases}  0.7, & t\leq t^{\ast},\\
				\beta_\text{post}, & t> t^{\ast},	
			\end{cases}
			\qquad\text{and}\qquad \beta_Q = \beta_{Q,t} =\begin{cases}  0.7, & t\leq t^{\ast},\\
				\beta_\text{post}, & t> t^{\ast},	
			\end{cases}
		\end{align}
		such that there is an upward change in the persistences at time $t^{\ast} \in [n]$. 
		In the baseline simulation setup, we fix $\beta_\text{post} = 0.85$.
		
		Through \eqref{eqn:SimParamMisspec}, we misspecify the dynamics of the process, which will then be picked up by the monitoring procedure.
		Of course, if $t^{\ast}=n$, then there is no structural change in the sample, and the systemic risk forecasts (issued from the estimates of the initial fixed window) are correctly specified. 
		Hence, except for the estimation error, the forecasts are correctly specified, such that a rejection probability close to the nominal level should be expected.

		\subsection{Systemic Risk Forecasts from DCC--GARCH Models}
		\label{sec:SystemicRiskForecasts}
		
		For generating the forecasts, we either use fixed DCC--GARCH  parameters or (in one setup) estimate the parameters based on a sample $\big\{(X_t,Y_{1t},\ldots,Y_{Kt})^\prime\big\}_{t=-E+1,\ldots,0}$ of length $E$ by using  the \texttt{R} package \texttt{rmgarch} \citep{rmgarch}. 
		In doing so, we estimate the marginals via standard Gaussian quasi-maximum likelihood estimation (to obtain $K+1$ estimates of $\omega_{G,i}$, $\alpha_{G,i}$ and $\beta_{G,i}$). 
		The dependence parameters ($\nu$, $\alpha_Q$ and $\beta_Q$) are estimated in a second step using a multivariate $t$-assumption for the $\vepsi_t$.

		Given DCC-forecasts (either based on fixed or estimated model parameters) for $\widehat{\mD}_t$, $\widehat{\mR}_t$ and $\widehat{\mH}_t$, we obtain the VaR forecasts at time $t$ as $\widehat{\VaR}_{t,\b} = \widehat{\mD}_{t,11}^{-1} q_{t_\nu}(\beta)$, where $q_{t_\nu}(\beta)$ denotes the $\beta$-quantile of the univariate $t_{{\nu}}$-distribution with unit variance and (either fixed or estimated) degrees of freedom $\nu$.
		For the CoVaR forecasts, we are not aware of a closed-form solution.
		Instead, we apply a root finding algorithm to approximate $\widehat{\CoVaR}_{kt,\a|\b}$ via
		\begin{align}
			\label{eqn:CoVaRFCApprox} 
			\mathbb{P} 
			\Big\{ Y_{kt} \ge \widehat{\CoVaR}_{kt,\a|\b}, \; X_t \ge \widehat{\VaR}_{t,\b} \;\Big| \; \mathcal{F}_{t-1} \Big\} -  (1-\alpha) (1-\beta) = 0.
		\end{align}
		The probability in  \eqref{eqn:CoVaRFCApprox} is taken with  respect to $(X_t, Y_{kt})\mid\mathcal{F}_{t-1} \sim t_\nu(\widehat{\mH}_{t,k})$, where $\widehat{\mH}_{t,k} = \big( ( \widehat{\mH}_{t, 11}, \widehat{\mH}_{t, 1 (k+1)})'  \mid ( \widehat{\mH}_{t, 1 (k+1)}, \widehat{\mH}_{t, (k+1) (k+1)})' \big)$
		consists of the respective  entries of the forecasted variance-covariance matrix $\widehat{\mH}_{t}$.
		Based on the forecasts $\widehat{\VaR}_{t,\b}$ and $\widehat{\CoVaR}_{kt,\a|\b}$, we compute the VaR and CoVaR indicators as described in Sections~\ref{CoVaR Monitoring}--\ref{ReversedCoVaRMonitoring}.

		\subsection{Simulation Results under the Null Hypothesis}
		\label{sec:Simulation_Results_Null}

		All following results are based on 5000 simulation replications.
		We first fix $n=1000$, $t^\ast=n$, $m=250$, and generate the systemic risk forecasts based on the correct DCC--GARCH parameters from the ``pre-break'' period such that we omit parameter estimation noise.
		We deliberately do not compare against one-shot systemic risk backtests of, e.g., \citet{Bea21} and \citet{FH21} since these---opposed to our monitoring procedure---accumulate type I errors when applied repeatedly.

		\begin{table}[tb]
			\centering
			\resizebox{\linewidth}{!}{
			\small
			\begin{tabular}{llllrlrrrrrrrrrrrr}
				\toprule
				$\alpha = \beta$ &  & $K$ &  & Joint &  & VaR & & \multicolumn{10}{c}{CoVaR for series number $k$} \\
				\cmidrule{9-18}  
				&&&&&&&& 1 & 2 & 3 & 4 & 5 & 6 & 7 & 8 & 9 & 10 \\ 
				\midrule
				\multirow{4}{*}{0.9} &  & 1 &  & 10.00 &  & 5.30 && 4.70 &  &  &  &  &  &  &  &  &  \\ 
				&  & 2 &  & 10.28 &  & 3.30 && 3.46 & 3.56 &  &  &  &  &  &  &  &  \\
				&  & 5 &  & 7.58 &  & 1.46 && 1.14 & 1.32 & 1.40 & 1.10 & 1.32 &  &  &  &  &  \\ 
				&  & 10 &  & 8.36 &  & 0.78 && 0.92 & 0.74 & 0.96 & 0.80 & 0.64 & 0.54 & 0.88 & 0.86 & 0.78 & 0.58 \\ 
				\midrule
				\multirow{4}{*}{0.95} &  & 1 &  & 9.62 &  & 4.28 && 5.36 &  &  &  &  &  &  &  &  &  \\ 
				&  & 2 &  & 9.26 &  & 3.46 && 3.04 & 2.82 &  &  &  &  &  &  &  &  \\
				&  & 5 &  & 8.24 &  & 1.56 && 1.22 & 1.24 & 1.38 & 1.50 & 1.52 &  &  &  &  &  \\ 
				&  & 10 &  & 6.38 &  & 0.68 && 0.66 & 0.60 & 0.52 & 0.74 & 0.60 & 0.52 & 0.46 & 0.68 & 0.70 & 0.50 \\ 
				\bottomrule
			\end{tabular}
		}
			\caption{Joint and disaggregated \textit{first} rejection rates (in percent) of our CoVaR surveillance procedure under the null hypothesis of correctly specified forecasts. 
			Results are reported for a nominal level of $\iota = 10\%$, for $K \in \{1,2,5,10\}$ and $\alpha = \beta \in \{0.9, 0.95\}$.
			The column ``VaR'' reports the percentage of cases in which the VaR detector raised the first alarm, while the columns labeled ``1''--``10'' report the corresponding first-alarm rates of the respective CoVaR detectors.}
			\label{tab:SizeCoVaR}
		\end{table}

		\begin{table}[tb]
			\centering
			\resizebox{\linewidth}{!}{
			\small
			\begin{tabular}{llllrlllrrrrrrrrrr}
				\toprule
				&&&&&&&& \multicolumn{10}{c}{Series number $k$} \\
				\cmidrule{9-18}  
				$\alpha = \beta$ &  & $K$ &  & Joint &  & Measure &  & 1 & 2 & 3 & 4 & 5 & 6 & 7 & 8 & 9 & 10 \\ 
				\midrule
				&  & \multirow{2}{*}{2} &  & \multirow{2}{*}{9.32} &  & VaR &  & 2.22 & 2.48 &  &  &  &  &  &  &  &  \\ 
				&  & &  &  &  & RCoVaR &  & 2.32 & 2.44 &  &  &  &  &  &  &  &  \\[0.3em] 
				\multirow{2}{*}{0.9} &  & \multirow{2}{*}{5} &  & \multirow{2}{*}{6.96} &  & VaR &  & 0.70 & 0.88 & 0.62 & 0.92 & 1.06 &  &  &  &  &  \\ 
				&  &  &  &  &  & RCoVaR &  & 0.66 & 0.56 & 0.56 & 0.68 & 0.62 &  &  &  &  &  \\[0.3em]  
				&  & \multirow{2}{*}{10} &  & \multirow{2}{*}{7.70} &  & VaR &  & 0.48 & 0.40 & 0.50 & 0.44 & 0.36 & 0.48 & 0.64 & 0.62 & 0.54 & 0.50 \\ 
				&  &  &  & &  & RCoVaR &  & 0.30 & 0.34 & 0.28 & 0.52 & 0.34 & 0.30 & 0.30 & 0.30 & 0.36 & 0.26 \\ 
				\midrule
				&  & \multirow{2}{*}{2} &  & \multirow{2}{*}{9.64} &  & VaR &  & 2.80 & 2.92 &  &  &  &  &  &  &  &  \\ 
				&  & &  & &  & RCoVaR &  & 2.12 & 2.22 &  &  &  &  &  &  &  &  \\[0.3em]  
				\multirow{2}{*}{0.95} &  &  \multirow{2}{*}{5} &  & \multirow{2}{*}{8.12} &  & VaR &  & 1.00 & 1.02 & 1.20 & 0.88 & 1.18 &  &  &  &  &  \\ 
				&  &  &  &  &  & RCoVaR &  & 0.80 & 0.70 & 0.78 & 0.74 & 0.82 &  &  &  &  &  \\[0.3em]  
				&  &  \multirow{2}{*}{10} &  & \multirow{2}{*}{6.70} &  & VaR &  & 0.66 & 0.56 & 0.44 & 0.44 & 0.40 & 0.50 & 0.48 & 0.52 & 0.54 & 0.50 \\ 
				&  & &  & &  & RCoVaR &  & 0.26 & 0.30 & 0.22 & 0.22 & 0.20 & 0.22 & 0.28 & 0.22 & 0.42 & 0.12 \\ 
				\bottomrule
			\end{tabular}
		}
			\caption{Joint and disaggregated \textit{first} rejection rates (in percent) of our RCoVaR surveillance procedure under the null hypothesis of correctly specified forecasts. 
			Results are reported for a nominal level of $\iota = 10\%$, for $K \in \{2,5,10\}$ and $\alpha = \beta \in \{0.9, 0.95\}$.
			The columns labeled ``1''--``10'' report the percentage of cases in which the respective VaR or CoVaR detector---indicated in the column ``Measure''---raised the first alarm.}
			\label{tab:SizeRCoVaR}
		\end{table}

		The columns ``Joint'' in Tables~\ref{tab:SizeCoVaR} and \ref{tab:SizeRCoVaR} show the rejection rates of the CoVaR and RCoVaR monitoring procedures under the no-break null hypotheses---i.e., $t^\ast=n$ in \eqref{eqn:SimParamMisspec}.
		We consider $\alpha = \beta \in \{0.9, 0.95\}$ together with $K \in \{1,2,5,10\}$ for the CoVaR and  $K \in \{2,5,10\}$.
		For $K=1$, the rejection rates of the CoVaR and RCoVaR coincide due to the symmetry of the DGP.
		We find that all combined empirical rejection rates in the columns ``Joint'' are below $10.28\%$, which corresponds to our theoretical finding that the monitoring procedures hold size exactly.\footnote{Given that these empirical rejection rates are based on 5000 Monte Carlo replications, a $99\%$-confidence interval of IID Bernoulli-distributed variables with success probability 0.1 is given by approximately $[8.9\%, 11.1\%]$, indicating that our procedure is able to hold size \emph{exactly}, while the deviations from $10\%$ are explained by the finite number of Monte Carlo replications.}
		Even for $K=10$, the rejection rates are not too conservative and lie between $6\%$ and $9\%$, which illustrates that the application of Boole's inequality in the proof of Theorem~\ref{thm:CoVaR monitor} is relatively tight here.

		As the columns ``Joint'' merely indicate whether \emph{any} of the detectors raises an incorrect alarm under the null, we also analyze \emph{which} of the detectors is responsible for the rejection.
		For this, the right-hand side columns (denoted ``VaR'' and with the series numbers ``1''--``10'') of the tables show the frequencies how often a given detector is responsible for \emph{first} raising a false alarm.
		Notice here that the joint rejection frequency is somewhat lower than the sum of the individual frequencies as multiple detectors occasionally reject simultaneously.
		Overall, we find that for both, CoVaR and RCoVaR, the spurious alarms of the individual detectors are relatively balanced between the VaR and the systemic risk detectors as well as between the different assets $k=1,\dots,K$.
		Especially the former is noteworthy, and as desired by the choice of the level $\nu$ in steps 4 and 5 of the Algorithms~\ref{algo:1} and \ref{algo:RevCoVaR}.

\subsection{Simulation Results under the Alternative Hypothesis}
\label{sec:Simulation_Results_Alternative}		

We now analyze our procedures' power to detect misspecified forecasts. 
For this, Figure~\ref{fig:PowerCoVaR} plots rejection frequencies under the alternative ($t^\ast<n$ in \eqref{eqn:SimParamMisspec}).
The plots in the left panel display power against the break point $t^\ast \in \{0,1, \dots, 1000\}$, with $\beta_{\text{post}} = 0.85$ in \eqref{eqn:SimParamMisspec}.
The right panel shows power for a fixed break point at  $t^\ast = 0$ for a varying degree of post-break parameter misspecification $\beta_\text{post} \in \{0.7, 0.75, 0.8, 0.85, 0.87, 0.89, 0.899\}$.
We again consider $K\in\{1,2,5,10\}$ and $\alpha = \beta \in \{0.9, 0.95\}$.
Notice that monitoring only starts at $t=m=250$.
In the left panel, the plots recover the ``joint'' size from Tables~\ref{tab:SizeCoVaR}--\ref{tab:SizeRCoVaR} for $t^\ast = n = 1000$, and in the right panel, for $\beta_\text{post} = 0.7$.

\begin{figure}[tb]
	\centering
	\includegraphics[width=\textwidth]{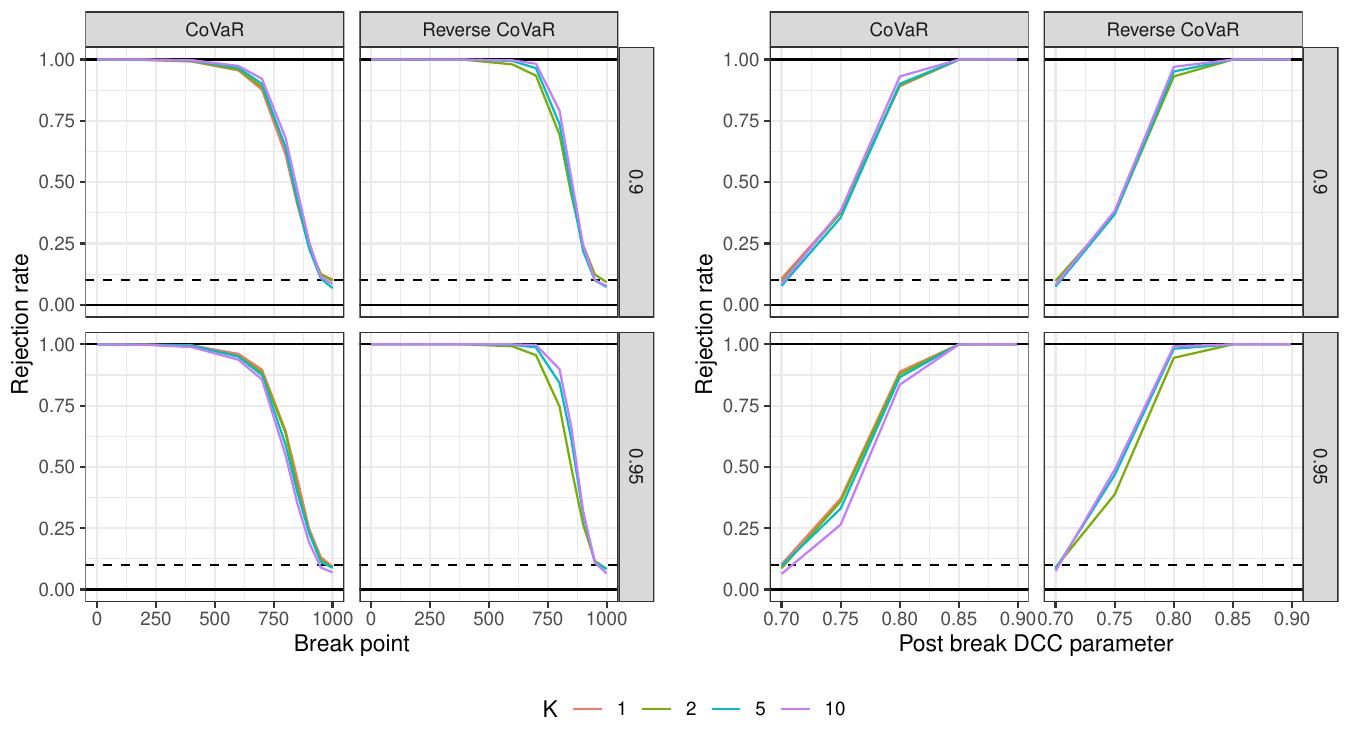}
	\caption{Rejection rates of our CoVaR and RCoVaR surveillance methods plotted against the break point in \eqref{eqn:SimParamMisspec} in the left panel and against the post-break DCC parameter in the right panel.
	In both plots, we consider $\alpha = \beta \in \{0.9, 0.95\}$ as well as $K \in \{1,2,5,10\}$ for the CoVaR and  $K \in \{2,5,10\}$ for the RCoVaR.}
	\label{fig:PowerCoVaR}
\end{figure}

As expected, power increases monotonically for earlier break points as well as for higher degrees of parameter misspecification in all plots, and the procedure works equally well for both probability levels.
Whether a larger number $K$ of to-be-monitored stocks increases or decreases power depends on the specific situation.
In general, monitoring more sequences simultaneously is subject to a trade-off between a stricter correction of the significance level and more series in which misspecifications can be detected.

When comparing the monitoring procedures \emph{across} systemic risk measures, we find that the reverse CoVaR procedure is the most powerful, which can be explained by the fact that more ($2K$ instead of $K+1$) forecast series are monitored, and that more ($K$ instead of 1) VaR series are monitored simultaneously, which are not ``as far in the tail'' as the CoVaR.

\begin{figure}[tb]
	\centering
	\includegraphics[width=0.75\textwidth]{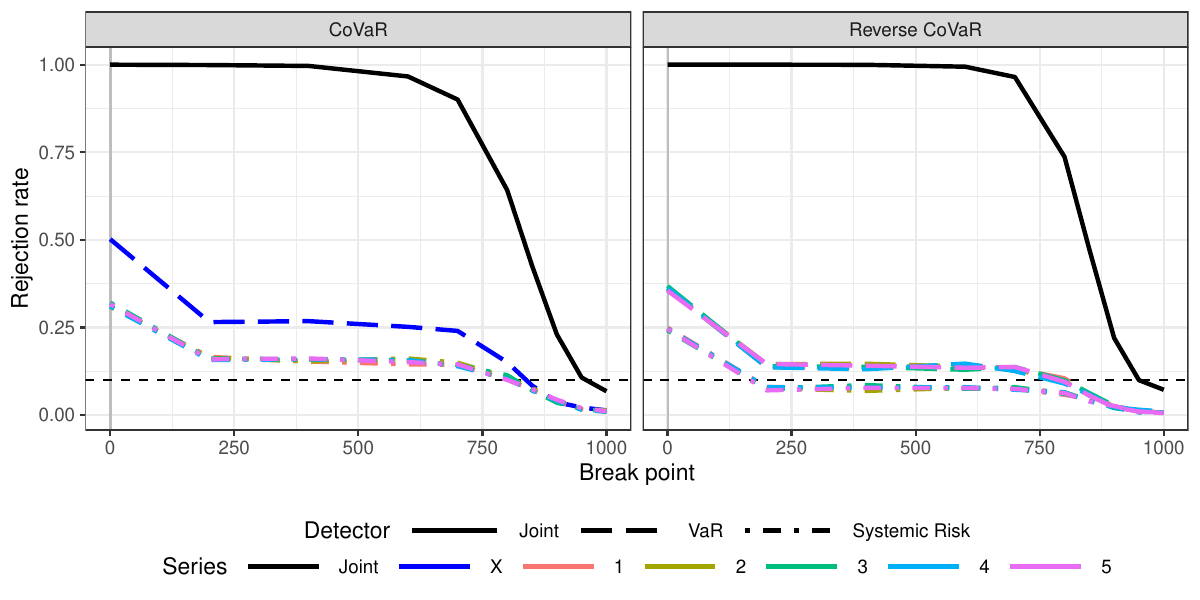}
	\caption{Rejection rates of the joint CoVaR and RCoVaR procedure in black, and frequencies how often the individual detectors are the \emph{first} to generate a detection in colors in the setting of the left panel of Figure~\ref{fig:PowerCoVaR}.
	Dashed lines depict ``first'' rejection rates for the VaR detector, and dot-dashed lines for the systemic risk detectors. The colors indicate the respective component of $\mW_t$. 
	The nominal level of $\iota = 10\%$ is indicated by the dashed horizontal line.}
	\label{fig:PowerCoVaRIndividual}
\end{figure}

Figure~\ref{fig:PowerCoVaRIndividual} shows the joint rejection rates in the same setting as in the left panel of Figure~\ref{fig:PowerCoVaR} in black, keeping $K=5$ and $\alpha = \beta = 0.9$ fixed.
Additionally, the colored dashed (VaR) and dot-dashed (Systemic Risk) lines indicate how often the respective detector was the first to raise an alarm, akin to the numbered columns in Tables~\ref{tab:SizeCoVaR}--\ref{tab:SizeRCoVaR}.
We find that the systemic risk detectors for different $k \in [K]$ have identical rates of rejecting first, which is sensible given the symmetry of the DCC--GARCH DGP.
Moreover, the VaR detector is more powerful than the systemic risk detectors, most likely as the VaR is not as far in the tail.

\begin{figure}[tb]
	\centering
	\includegraphics[width=\textwidth]{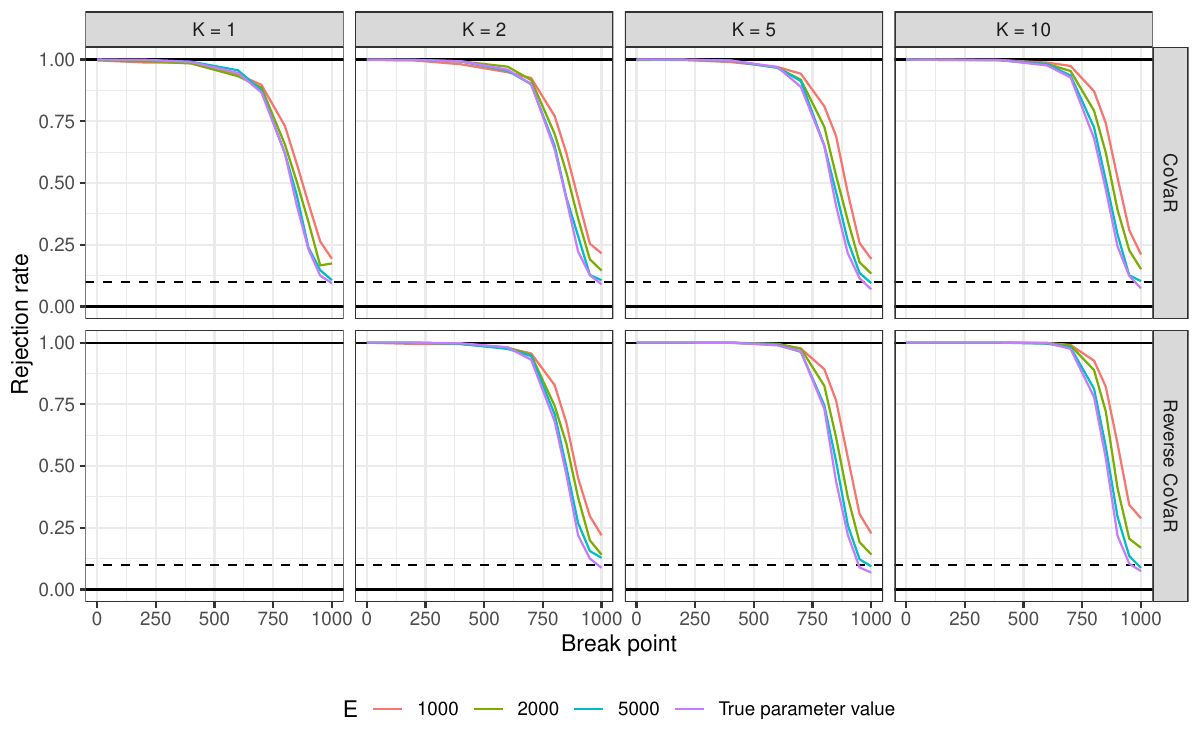}
	\caption{Rejection rates of our CoVaR and RCoVaR surveillance methods plotted against the break point for different estimation window lengths $E$, where $E = \infty$ implies the use of the correct parameters.
	We consider $K \in \{1,2,5,10\}$ and note that the rejection rates of CoVaR and RCoVaR coincide for $K=1$.
	We further keep $n=1000$, $t^\ast = 0$,  $\beta_\text{post} = 0.85$ and $\alpha = \beta = 0.9$ fixed.}
	\label{fig:PowerEstWindowCoVaR}
\end{figure}

We continue to analyze the effect parameter estimation noise within the forecasts has on the rejection frequencies. 
Recall that, as argued in Remark~\ref{rem:ModelEstimation}, we view model estimation error as part of a misspecified forecast sequence.
Figure~\ref{fig:PowerEstWindowCoVaR} compares the rejection rates when the forecasts are based on (correctly specified) DCC--GARCH models that are estimated on an in-sample period of length $E \in \{1000, 2000, 5000, \infty\}$ with $K \in \{1,2,5,10\}$, while fixing $n=1000$, $m=250$, $t^\ast = 0$, $\beta_\text{post} = 0.85$, and $\alpha = \beta = 0.9$.
We find that small(er) estimation window sizes distort the systemic risk forecasts and, hence, deliver increased rejection rates of up to $20\%$ even in the ``no-break case'' with $t^\ast = 1000$.
This effect is somewhat more pronounced for the RCoVaR than for the CoVaR.
A further effect of the model estimation is that power increases naturally for  $t^\ast < 1000$ when the length of the estimation period $E$ is decreased.

Overall, our simulations reinforce the theoretical finding that the surveillance schemes hold size exactly, even when the tests are employed repeatedly at every time point.
Furthermore, the power of our tests behaves naturally for varying break times, different break magnitudes, and a varying dimensionality of the monitored sequences.
Finally, we find that VaR and systemic risk rejections occur relatively balanced under the null, whereas the VaR detector naturally has more power (i.e., it detects earlier), due to the systemic risk measures being further out in the tail.

\section{Empirical Application}
\label{Empirical Application}
		
		We apply the systemic risk surveillance procedures to real financial data to analyze their sensitivity to sub-optimal risk forecasts in practice.
		For the market indicator $X_t$, we use the negative returns of the S\&P\,500 Financials index (SPF) and we use the $K=5$ systemically important US banks Bank of America Corp (BAC), Citigroup Inc (C), Goldman Sachs Group Inc (GS), JPMorgan Chase \& Co (JPM) and Wells Fargo \& Co (WFC) as $(Y_{1t}, \dots, Y_{Kt})^\prime$.
		The four considered monitoring time periods each span $n=1000$ days and are chosen as 
		(i) the global financial crisis: 23 September 2005 -- 14 September 2009, containing the bankruptcy of Lehman Brothers on 15 September 2008;
		(ii) a calm period: 14 January 2013 -- 30 December 2016;
		(iii) the COVID period: 23 March 2017 -- 12 March 2021, containing the  outbreak of the COVID pandemic; and
		(iv) Trump's tariffs: 29 October 2021 -- 23 October 2025, containing the introduction of Trump's tariffs on 2 April 2025 (and the collapse of the Silicon Valley Bank on 10 March 2023).
		In all four settings, we use a rolling monitoring window of $m=250$ trading days, consider the systemic risk measures at levels $\alpha = \beta = 0.95$ and use the monitoring level of $\iota = 0.1$.
		
		\begin{figure}[tb]
			\centering
			\includegraphics[width=\textwidth]{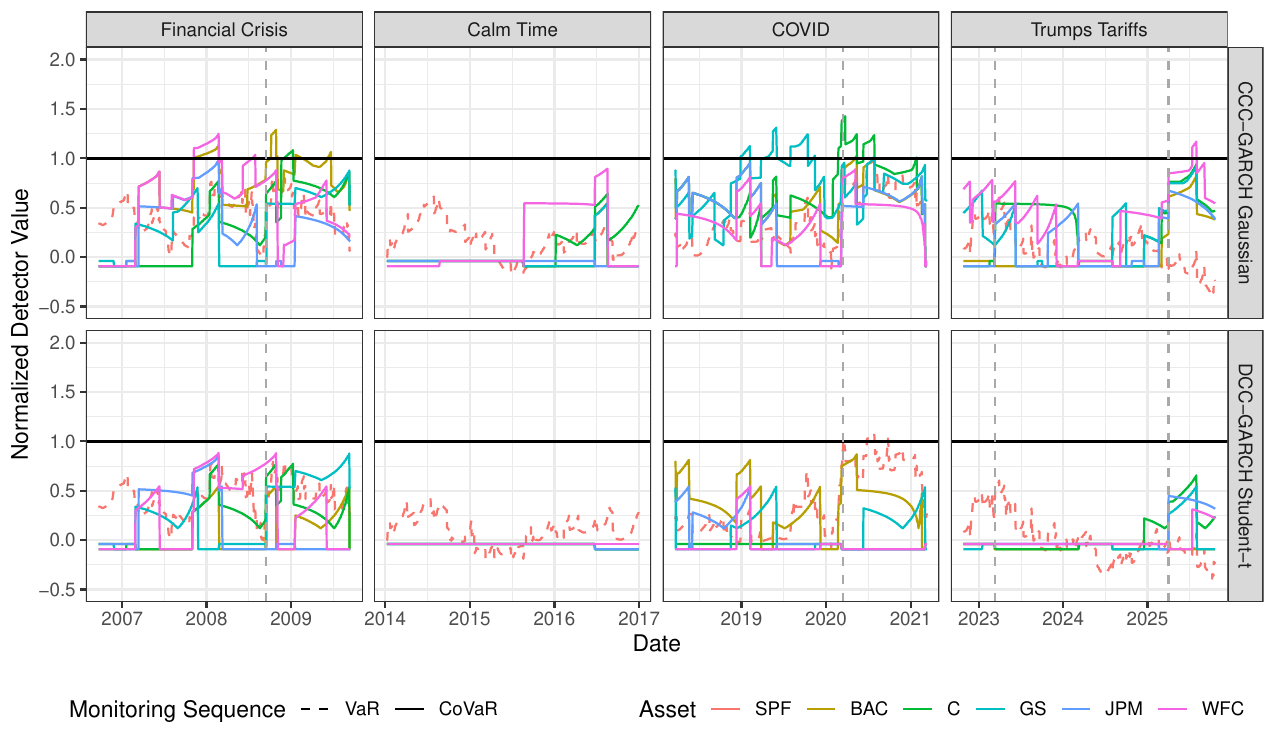}
			\caption{
				Normalized VaR and CoVaR detector values for the SPF, and the financial institutions BAC, C, GS, JPM and WFC for forecasts from a Gaussian CCC--GARCH and a Student's $t$ DCC--GARCH model.
				We use the specifications $\iota=0.1$, $\alpha=\beta=0.95$,  $E = 1500$, $n=1000$ and $m=250$. 
				The displayed detector values are normalized by their respective critical values, such that a  detector exceeding the black horizontal unit line implies a detection.
				The vertical dotted lines represent the bankruptcy of Lehman Brothers on 15 September 2008, 
				the beginning of the COVID crisis on 13 March 2020 (when the US declared a national emergency), 
				the collapse of the Silicon Valley Bank on 10 March 2023, and 
				the tariff announcement of Donald Trump on 2 April 2025.
			}
			\label{fig:Detectors_CoVaR}
		\end{figure}

		We generate systemic risk forecasts by modeling the $K+1=6$ returns by a CCC--GARCH model with Gaussian innovations and a DCC--GARCH model with Student's $t$ innovations, introduced by \citet{Eng02} and further described in Section~\ref{Data-Generating Process}.
		We deliberately choose a Gaussian distribution for the CCC--GARCH model's innovations to analyze a suboptimal model.
		The latter DCC--GARCH--$t$ model performs relatively well in horse races of multivariate volatility models \citep{LRV12,LRV13} and is, hence, still a standard forecasting model in financial risk management.
		From these two models, we generate CoVaR and RCoVaR forecasts as described in Section~\ref{sec:SystemicRiskForecasts}  based on the presumed distributions of the model innovations $\vepsi_t\mid\mathcal{F}_{t-1}\sim t_{\nu}(\mR_t)$; using $\nu = \infty$ for the Gaussian case.
		The models are estimated in all four settings using $E = 1500$ observations, i.e., data starting approximately six years before the beginning of the monitoring period.
		
		Figure~\ref{fig:Detectors_CoVaR} presents the VaR and CoVaR detector values, normalized by their respective critical values, for both multivariate GARCH models across the four considered time periods.
		Under this normalization, values exceeding unity indicate rejections of the null hypothesis.
		The detector values may take negative values as a consequence of their standardization, as defined, for example, in \eqref{eqn:VaRDetStandardization}.
		
		While we do not find rejections during the calm time for either forecasting model, the detectors raise an alarm for the CCC--GARCH--$\mathcal{N}$ forecasts in all three other considered periods.
		Importantly, these alarms are raised by the systemic CoVaR detector, hence implying that specifically the systemic component of the risk forecasts is misspecified.
		This highlights the importance of monitoring \textit{systemic} risk as opposed to monitoring only the risk component (i.e., the VaR) via the procedures of \citet{HD22a+} and \citet{WWZ23}.
		In contrast, for the DCC--GARCH--$t$ forecasts, only the VaR component raises an alarm after the start of the COVID pandemic, showing that the dynamic correlation structure implies by the DCC is much better able to capture the changing co-movements in turbulent times.

	\begin{figure}[tb]
			\centering
			\includegraphics[width=\textwidth]{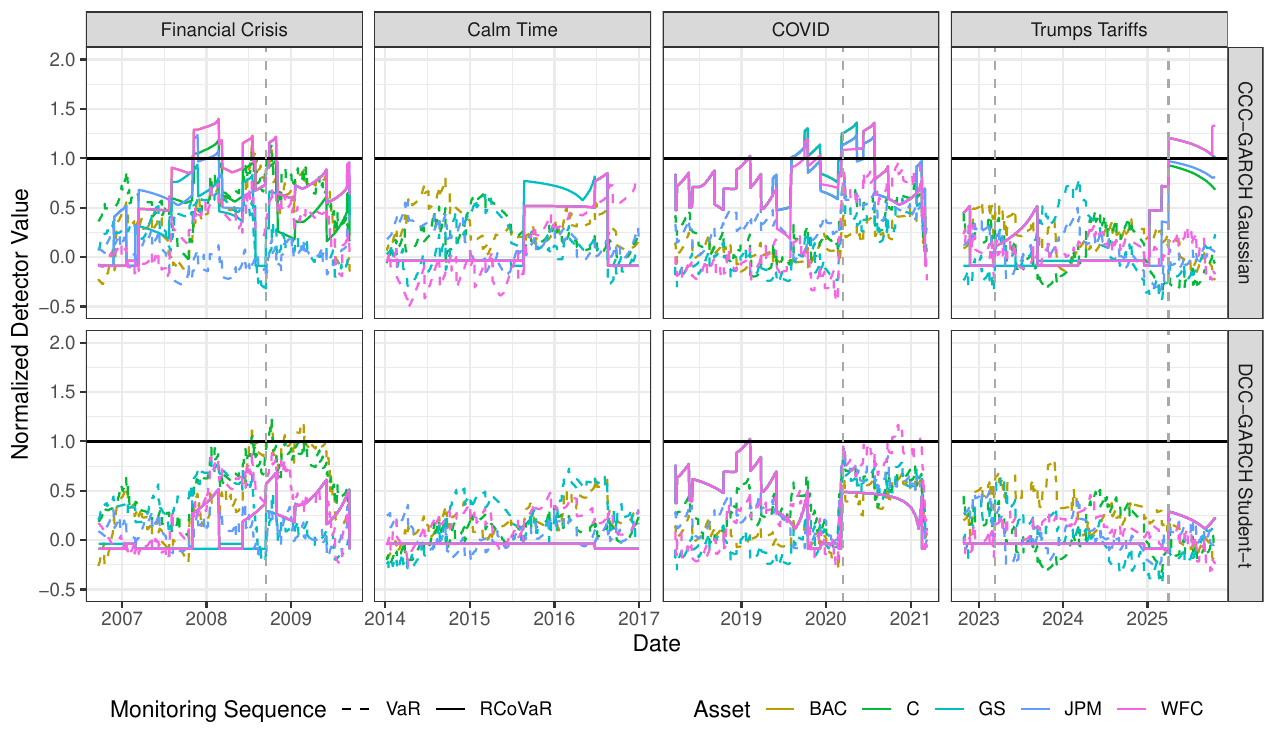}
			\caption{
				Normalized VaR and reverse CoVaR detector values. For details, see the caption of Figure~\ref{fig:Detectors_CoVaR}.
			}
			\label{fig:Detectors_RevCoVaR}
	\end{figure}

		Figure~\ref{fig:Detectors_RevCoVaR} provides corresponding results for the RCoVaR forecasts with overall similar findings.
		While there are no rejections during the calm time in either model, especially the RCoVaR detectors raise alarms for the CCC--GARCH--$\mathcal{N}$ forecasts around the times where (systemic) market risks have increased markedly.
		In contrast, for the Student's $t$ DCC--GARCH forecasts, mostly the VaR detectors yield rejections, implying once more that this model with its dynamic correlation structure is much better suited to model systemic risks.
		
		As for the attribution of forecast failure to a specific institution, consider the financial crisis, which had its origins in the housing market \citep{Mis11}.
		During this time, Wells Fargo's increased systemic riskiness was responsible for the forecast failure of the CCC--GARCH--$\mathcal{N}$ model. 
		This holds true when Wells Fargo is viewed both as a systemic risk \textit{receiver} and \textit{transmitter} (see Figures~\ref{fig:Detectors_CoVaR} and \ref{fig:Detectors_RevCoVaR}, respectively).
		The preeminent role of Wells Fargo early on may be explained by its heavy reliance on mortgage lending (later on increased through the acquisition of Wachovia in early 2008), which contrasts with the other considered retail and investment banks.

		Our empirical results are consistent with \citet{LRV12,LRV13}.
		In their horse race of multivariate volatility models, they find that DCC and CCC specifications perform equally well during calm times, yet the former are more adequate for crises periods.
		We find their conclusion for \textit{volatility} forecasting to also hold true for \textit{systemic risk} forecasting. 
		However, in contrast to the (repeated) ``one-shot analysis'' of \citet{LRV12,LRV13}, our monitoring procedures also allow us to date the time point when a forecast breakdown of a specific model occurs and, equally important, also to pinpoint the model failure to a specific institution.

\section{Conclusion}
\label{Conclusion}

Regulators as well as financial institutions are particularly concerned about the commonalities in risk factors, i.e., systemic risk. To effectively take preventive measures, it becomes vital to detect changes in systemic risk assessments as soon as possible. To that end, this paper proposes formal monitoring tools for systemic risk. These are shown to work well in simulations and are useful in practice, as the empirical application to the US~banking sector in Section~\ref{Empirical Application} demonstrates. 

The advantages of our proposed procedures are fourfold.
First, unlike classical ``one-shot'' backtests, our monitoring schemes control size under \emph{repeated application} over a fixed horizon, as required for, e.g., daily risk monitoring in financial markets.
Second, size control holds \emph{in finite samples} by construction, in contrast to asymptotic one-shot backtests such as \citet{FH21}.
Third, our procedures accommodate \emph{multiple time series} simultaneously, unlike the one-shot backtests of \citet{Bea21}, which are restricted to bivariate settings (i.e., $K=1$ in our notation).
Fourth, a Bonferroni-type correction enables us to \emph{attribute} detected deficiencies in systemic risk forecasts to specific institutions, a key feature in regulatory applications where identification matters as much as detection.

We stress that while our empirical application deals with systemic risk in the financial system, our monitoring procedures may be used more widely in other contexts.
For instance, it could be used by individual banks to monitor systemic risk forecasts in their financial positions. 
Again, the key concern of the institutions is not necessarily the risk inherent in their positions (as risk is associated with reward in financial markets), but the commonality in their exposures. 
This is because it is precisely during times of extreme co-movements that diversification benefits vanish, which---as the saying goes---is the only free lunch around. For this reason, it may also be useful for banks to monitor the systemic risk in their positions.

\singlespacing
\small 
\setlength{\bibsep}{4pt}
\bibliographystyle{jaestyle2}
\bibliography{bib_CoVaRMonitoring}

\appendix
\onehalfspacing
\normalsize

\section{Proofs of the main paper}\label{sec:Proofs of the main paper}

\begin{proof}[\textbf{Proof of Proposition~\ref{prop:CoVaR null}:}]
	Since the random variables on both sides of \eqref{eqn:CoVaRH0Law} are binary, their expectations and covariance determine their full probabilistic structure.
	
	First, it holds under $H_0^{\CoVaR}$ that $\E_{t-1} [ I_t] 
		= \E_{t-1} \big[ \1_{\{X_t > \widehat{\VaR}_{t,\beta}\}} \big]
		=  \E_{t-1} \big[ \1_{\{X_t > {\VaR}_{t,\beta}\}} \big]
		= 1- \beta$ for all $t \in \mathbb{N}$.
	Thus, by the law of iterated expectations (LIE), $\E[I_t]=1-\b=\E \big[ \1_{\{U_{1t} > \beta \}} \big]$, such that
	\begin{equation}\label{eq:It}
		I_t\overset{d}{=}\1_{\{U_{1t} > \beta \}},
	\end{equation}
	where $\overset{d}{=}$ denotes equality in distribution.
	
	Second, using the notation $\P_{t-1}\big\{X_t\leq\cdot,Y_{kt}\leq\cdot\big\}:=\P\big\{X_t\leq\cdot,Y_{kt}\leq\cdot\mid\mathcal{F}_{t-1}\big\}$, under $H_0^{\CoVaR}$ and for all $t \in \mathbb{N}$, we have
	\begin{align*}
		\E_{t-1}\big[I_{kt}\big] 
		&= \E_{t-1} \big[\1_{\{X_t> \widehat{\VaR}_{t,\beta},\ Y_{kt}> \widehat{\CoVaR}_{kt,\a|\b}\}} \big] \\
		&= \E_{t-1} \big[\1_{\{X_t> {\VaR}_{t,\beta},\ Y_{kt}> {\CoVaR}_{kt,\a|\b}\}} \big] \\
		&= \P_{t-1}\big\{X_t>\VaR_{t,\beta},\ Y_{kt}>\CoVaR_{kt,\alpha\mid\beta}\big\}\\
		&= \P_{t-1}\big\{X_t>\VaR_{t,\beta}\big\}\P_{t-1}\big\{Y_{kt}>\CoVaR_{kt,\alpha\mid\beta}\mid X_t>\VaR_{t,\beta}\big\}\\
		&= (1-\beta) (1-\alpha).
	\end{align*}
	Therefore, by the LIE, $\E\big[I_{kt}\big]=(1-\beta) (1-\alpha)=\E \big[ \1_{\{U_{1t}>\beta,\ U_{2t}>\alpha\}} \big]$, such that
	\begin{equation}\label{eq:Ikt}
		I_{kt}\overset{d}{=}\1_{\{U_{1t}>\beta,\ U_{2t}>\alpha\}}.
	\end{equation}
	
	Third, by using the above, we get for the contemporaneous covariance that
	\begin{align*}
		\Cov \big(I_t, I_{kt} \big)
		&= \E \big[I_t I_{kt} \big] - \E [I_t] \E [I_{kt}]  \\
		&= \E\Big\{\E_{t-1}\big[\1_{\{X_t> {\VaR}_{t,\beta},\ Y_{kt}> {\CoVaR}_{kt,\a|\b}\}} \big] \Big\}- (1-\beta)^2(1-\alpha) \\
		&= (1-\alpha)(1-\beta) - (1-\beta)^2(1-\alpha) \\
		&= \beta (1-\beta)(1-\alpha)\\
		&= \Big( 1- \E \big[ \1_{\{U_{1t}>\beta\}} \big] \Big) \E \big[ \1_{\{U_{1t}>\beta,\ U_{2t}>\alpha\}} \big] \\
		&=  \E \big[ \1_{\{U_{1t}>\beta\}} \1_{\{U_{1t}>\beta,\ U_{2t}>\alpha\}} \big] - \E \big[ \1_{\{U_{1t}>\beta\}} \big] \E \big[ \1_{\{U_{1t}>\beta,\ U_{2t}>\alpha\}} \big] \\
		&= \Cov \big( \1_{\{U_{1t}>\beta\}}, \1_{\{U_{1t}>\beta,\ U_{2t}>\alpha\}} \big).
	\end{align*}
	In light of this and \eqref{eq:It}--\eqref{eq:Ikt}, we conclude that 
	\begin{equation}\label{eq:Ijoint}
		(I_t, I_{kt})^\prime \overset{d}{=}(\1_{\{U_{1t}>\beta\}}, \1_{\{U_{1t}>\beta,\ U_{2t}>\alpha\}})^\prime.
	\end{equation}
	
	Fourth, to establish independence of $I_s$ and $I_{kt}$ for $s\neq t$, observe that under $H_0^{\CoVaR}$ and for $x,y \in \{0,1\}$,
	\begin{align}
		\P\big\{I_{s}= x\mid\mathcal{F}_{s-1}\big\} &=\beta\1_{\{x=0\}} + (1-\beta)\1_{\{x=1\}},\label{eq:(p.39)}\\
		\P\big\{I_{kt}= y\mid\mathcal{F}_{t-1}\big\}&=\big[1-(1-\alpha)(1-\beta)\big]\1_{\{y=0\}} + (1-\alpha)(1-\beta)\1_{\{y=1\}}.\label{eq:(21+)}
	\end{align}
	Let $s>t$.
	Then, for $x, y \in \{0,1\}$,
	\begin{align*}
		\P\big\{I_{s}= x,\ I_{kt}= y\mid\mathcal{F}_{s-1}\big\}&= \E\big[\1_{\{I_{s}= x,\ I_{kt}= y\}}\mid\mathcal{F}_{s-1}\big]\\
		&= \E\big[\1_{\{I_{kt}= y\}}\1_{\{I_{s}= x\}}\mid\mathcal{F}_{s-1}\big]\\
		&= \1_{\{I_{kt}= y\}}\E\big[\1_{\{I_{s}= x\}}\mid\mathcal{F}_{s-1}\big]\\
		&= \1_{\{I_{kt}= y\}}\P\big\{I_{s}= x\mid\mathcal{F}_{s-1}\big\}\\
		&= \1_{\{I_{kt}= y\}}\big[\beta\1_{\{x=0\}} + (1-\beta)\1_{\{x=1\}}\big]
	\end{align*}
	by \eqref{eq:(p.39)}, such that, by the LIE,
	\begin{align*}
		\P&\big\{I_{s}= x,\ I_{kt}= y\big\}\\
		&= \E\Big[\P\big\{I_{s}= x,\ I_{kt}= y\mid\mathcal{F}_{s-1}\big\}\Big]\\
		&= \E\big[\1_{\{I_{kt}= y\}}\big] \big[\beta\1_{\{x=0\}} + (1-\beta)\1_{\{x=1\}}\big]\\
		&= \Big\{\big[1-(1-\alpha)(1-\beta)\big]\1_{\{y=0\}} + (1-\alpha)(1-\beta)\1_{\{y=1\}}\Big\} \big[\beta\1_{\{x=0\}} + (1-\beta)\1_{\{x=1\}}\big]\\
		&= \P\big\{I_{kt}= y\big\} \P\big\{I_{s}= x\big\}.
	\end{align*}
	Since the case $s<t$ can be treated analogously (using \eqref{eq:(21+)} instead of \eqref{eq:(p.39)}), the above factorization holds for any $s\neq t$.
	Now, showing full independence, i.e.,
	\begin{multline}\label{eq:IND}
		\P\big\{I_{s_1}= x_1,\ldots, I_{s_{\ell}}=x_{\ell},\ I_{kt_1}= y_1,\ldots,I_{kt_m}= y_m\big\}\\
		=\P\big\{I_{s_1}= x_1\big\}\cdot \ldots \cdot\P\big\{I_{s_{\ell}}=x_{\ell}\big\}\cdot \P\big\{I_{kt_1}= y_1\big\}\cdot \ldots \cdot \P\big\{I_{kt_m}= y_m\big\}
	\end{multline}
	for $s_1<\ldots<s_{\ell}$ and $t_1<\ldots<t_m$ ($\ell,m\in\mathbb{N}$) with $s_i\neq t_j$ ($i=1,\ldots,\ell$ and $j=1,\ldots,m$) is only notationally more complicated.
	
	Combining \eqref{eq:It}--\eqref{eq:Ijoint} with \eqref{eq:IND} completes the proof.
\end{proof}

\begin{proof}[\textbf{Proof of Theorem~\ref{thm:CoVaR monitor}:}]
	The actual probability $\iota^{\ast}$ of making a type I error in the sense of \eqref{eqn:CoVaRMonitoringTypeIError}  is
	\begin{align}
		\iota^\ast
		&:=\P_{H_0^{\CoVaR}} \Big\{ \exists T \in \{m,\ldots,n\}: \quad \VaR(T) \ge v  \notag\\
		&\hspace{7cm}\text{  or  }  \;   \CoVaR_{k}(T) \ge c_k \quad \text{for some $k \in [K]$}  \Big\} \notag \\
		& = \P_{H_0^{\CoVaR}}\bigg\{\Big\{\sup_{T=m,\ldots,n}\VaR(T) \geq v\Big\} \cup \bigcup_{k=1}^{K}\Big\{\sup_{T=m,\ldots,n}\CoVaR_{k}(T) \geq c_k\Big\}\bigg\}\notag\\
		& = \P_{H_0^{\CoVaR}}\Big\{\sup_{T=m,\ldots,n}\VaR(T) \geq v\Big\} \notag\\
		&\hspace{1cm} + \P_{H_0^{\CoVaR}}\bigg\{ \bigcup_{k=1}^{K}\Big\{\sup_{T=m,\ldots,n}\CoVaR_{k}(T) \geq c_k\Big\} \; \setminus \; \Big\{\sup_{T=m,\ldots,n}\VaR(T) \geq v\Big\} \bigg\},\label{eq:size}
	\end{align}
	where we used that $\P\big\{A\cup B\big\}=\P\big\{A\big\} + \P\big\{B\setminus A\big\}$ in the third step.

	If $K=1$, all probabilities in the above equation can be computed by virtue of Proposition~\ref{prop:CoVaR null}. However, for $K>1$, the null is silent about the dependence structure between $I_{kt}$ and $I_{k^\prime t}$ ($k\neq k^\prime$) and, hence, between $\CoVaR_{k}(T)$ and $\CoVaR_{k^\prime}(T)$. 
	Similarly as for the well-known Bonferroni correction, we therefore apply Boole's inequality to obtain that
	\begin{align*}
		\P_{H_0^{\CoVaR}}&\bigg\{ \bigcup_{k=1}^{K}\Big\{\sup_{T=m,\ldots,n}\CoVaR_{k}(T) \geq c_k\Big\}\setminus\Big\{\sup_{T=m,\ldots,n}\VaR(T) \geq v\Big\} \bigg\}\\
		&\leq \sum_{k=1}^{K}\P_{H_0^{\CoVaR}}\bigg\{ \Big\{\sup_{T=m,\ldots,n}\CoVaR_{k}(T) \geq c_k\Big\}\setminus\Big\{\sup_{T=m,\ldots,n}\VaR(T) \geq v\Big\} \bigg\}\\
		&= \sum_{k=1}^{K}\P_{H_0^{\CoVaR}}\Big\{\sup_{T=m,\ldots,n}\CoVaR_{k}(T) \geq c_k\Big\}\\
		&\hspace{2cm}- \P_{H_0^{\CoVaR}}\Big\{\sup_{T=m,\ldots,n}\CoVaR_{k}(T) \geq c_k,\ \sup_{T=m,\ldots,n}\VaR(T) \geq v \Big\}.
	\end{align*}
	Plugging this into \eqref{eq:size}, we get that
	\begin{align*}
		\iota^{\ast}&\leq \P_{H_0^{\CoVaR}}\Big\{\sup_{T=m,\ldots,n}\VaR(T) \geq v \Big\} + \sum_{k=1}^{K}\P_{H_0^{\CoVaR}}\Big\{\sup_{T=m,\ldots,n}\CoVaR_{k}(T) \geq c_k\Big\} \\
		&\hspace{2cm} - \sum_{k=1}^{K}\P_{H_0^{\CoVaR}}\Big\{\sup_{T=m,\ldots,n}\CoVaR_{k}(T) \geq c_k,\ \sup_{T=m,\ldots,n}\VaR(T) \geq v \Big\}.
	\end{align*}
	Since the right-hand side equals $\iota$ by \eqref{eq:size ineq}, we can deduce that the actual level $\iota^\ast$ of the monitoring procedure is smaller than or equal to $\iota$.
	Hence, size at level $\iota$ may be controlled \textit{even in finite samples}.
\end{proof}

\begin{proof}[\textbf{Proof of Theorem~\ref{thm:CoVaR rev monitor}:}]
	The actual probability $\iota^{\ast}$ of making a type I error for the rejection rule implied by Theorem~\ref{thm:CoVaR rev monitor} is
	\begin{align*}
		\iota^{\ast} &= \P_{H_0^{\RCoVaR}}\bigg\{\bigcup_{k=1}^{K}\Big\{\sup_{T=m,\ldots,n}\VaR_{k}(T) \geq v_k \Big\}\cup\Big\{\sup_{T=m,\ldots,n} \RCoVaR_{k}(T) \geq c_k \Big\}\bigg\}\\
		&\leq \sum_{k=1}^{K}\P_{H_0^{\RCoVaR}}\bigg\{\Big\{\sup_{T=m,\ldots,n}\VaR_{k}(T) \geq v_k \Big\}\cup\Big\{\sup_{T=m,\ldots,n}\RCoVaR_{k}(T) \geq c_k \Big\}\bigg\}\\
		&=\iota,
	\end{align*}
	where we have used Boole's inequality in the second step, and \eqref{eq:size ineq rev} in the final step.
	Once again, size is controlled.
\end{proof}

\pagebreak 
\onehalfspacing

\section{The CoES and MES}
\label{sex:CoESandMES}

We now turn to the systemic risk measures CoES and MES, which we formally define in Section~\ref{sec:DefCoES}. 
The monitoring procedures are introduced in Section~\ref{sec:CoESMonitoring}, and their finite-sample properties are examined through simulations in Section~\ref{sec:CoESSimulations}.
We defer the proofs of all technical results to Section~\ref{sec:Proofs}.

\subsection{Defining CoES and MES}
\label{sec:DefCoES}

Since CoVaR defined in \eqref{eqn:DefCoVaR} is merely an $\alpha$-quantile of a conditional distribution, it suffers from the same defects as the VaR, discussed, e.g., in \citet{Aea99} and \citet{EKT15}.
In particular, by not considering the magnitude of losses beyond itself, the CoVaR may not adequately capture systemic \textit{tail} risks. 
Therefore, we also consider the CoES, i.e.,
\begin{equation}
	\label{eq:CoES}
	\CoES_{kt,\a|\b}=\frac{1}{1-\a}\int_{\a}^{1}\CoVaR_{kt,\gamma|\b}\D \gamma.
\end{equation}
The CoES also encompasses the MES via $\MES_{kt,\b}=\CoES_{kt,0|\b}$. 
Under our assumption on the CDF of $(X_t,Y_{kt})^\prime\mid\mathcal{F}_{t-1}$ and by using the notation $\E_{t-1}[\cdot]:=\E[\cdot\mid\mathcal{F}_{t-1}]$, we have the intuitive formulas \citep[see, e.g.,][Lemma~2.13]{MFE15}
\begin{align*}
	\CoES_{kt,\a|\b}&=\E_{t-1}\big[Y_t\mid Y_t\geq \CoVaR_{kt,\a|\b},\ X_t\geq\VaR_{t,\b} \big],  \\
	\MES_{kt,\b}&= \E_{t-1}\big[Y_t\mid X_t\geq\VaR_{t,\b}\big].
\end{align*}

\subsection{CoES and MES Monitoring}
\label{sec:CoESMonitoring}

We now propose monitoring procedures for the CoES as defined in \eqref{eq:CoES} and for the MES, which arises as a special case of the CoES for $\alpha = 0$.
Recall that our monitoring approach for the CoVaR in Sections~\ref{CoVaR Monitoring}--\ref{ReversedCoVaRMonitoring} crucially relies on the fact that conditional calibration implies a given probabilistic behavior of the \emph{binary} sequences $I_t$ and $I_{kt}$ in \eqref{eqn:Indicators}.
This results from the VaR and CoVaR being certain quantiles, such that the associated identification functions in \eqref{eqn:CoVaRJointID} are binary.
In contrast, for the CoES (and MES), the respective joint identification functions given in equations~(S.5)--(S.6) in the supplementary material of \citet{FH21} take continuous values and their full probabilistic structure under the null is not known.
Hence, a direct extension of the CoVaR monitoring procedure with sequential finite-sample guarantees as in \eqref{eqn:MonitoringGuarantee} is not possible.
Therefore, we do not directly test the null of ideal VaR and CoES forecasts.

Instead, we test that the forecasts $\widehat{\VaR}_{t,\beta}$ and $\widehat{\CoVaR}_{kt,\gamma|\beta}$ are ideal for all $\gamma\in[\alpha,1)$.
This is closely related to testing VaR and CoES calibration, because $\CoES_{kt,\a|\b}=\frac{1}{1-\a}\int_{\a}^{1}\CoVaR_{kt,\gamma|\b}\D \gamma$ is defined by an integral over the CoVaR for $\gamma\in[\alpha,1)$.
Formally:
\begin{multline*}
	H_0^{\CoES}: \quad 
	\widehat{\VaR}_{t,\beta} = \VaR_{t,\b}\quad
	\text{and}\quad
	\widehat{\CoVaR}_{kt,\gamma|\beta} = \CoVaR_{kt,\gamma\mid\beta}\\
	\text{ for all }\gamma \in [\alpha, 1),\ k \in [K], \text{ and }t\in \mathbb{N}.
\end{multline*}

To test the ``CoVaR part'' of this null, we use the notation $I_{kt,\gamma|\b} := \1_{\{X_t> \widehat{\VaR}_{t,\beta},\ Y_{kt}> \widehat{\CoVaR}_{kt,\gamma|\b}\}}$ and follow \citet{Bea21} in considering the \textit{cumulative CoVaR violation sequence}

\begin{equation}
	\label{eq:CCVP}
	H_{kt,\a|\b} 
	:= \frac{1}{1-\a}\int_{\a}^{1} I_{kt,\gamma|\b} \D \gamma 
	=	\frac{\1_{\{X_t> \widehat{\VaR}_{t,\beta}\}}}{1-\a}\int_{\a}^{1} \1_{\{Y_{kt}> \widehat{\CoVaR}_{kt,\gamma|\b}\}} \D \gamma.
\end{equation}
Such cumulative (integrated) quantile violations in \eqref{eq:CCVP} are well-known from backtesting the univariate risk measure ES; see, e.g., \citet{Acerbi2002spectral}, \citet{DE17} and \citet{Du2024powerful}.
While the classical cumulative violation sequence integrates over quantile violations $I_t$, our conditional version in \eqref{eq:CCVP} uses the CoVaR-version of the violations $I_{kt,\gamma|\b}$.
Importantly, the sequence in \eqref{eq:CCVP} as an integral over CoVaR violations resembles the definition of the CoES in \eqref{eq:CoES} as an integral over the CoVaR values and, therefore, provides a suitable tool for CoES monitoring.

Our cumulative CoVaR violation sequence is related to a conditional tail-version of the classical \textit{probability integral transformations, PITs}, (also called \citet{Ros52} transformations) of $(X_t, Y_{kt})^\prime$.
These are an essential tool for assessing calibration of distributional forecasts \citep{DGT98, DM20, GR_2023}.
Define the classical PIT of $X_t$ and the conditional tail version of $Y_{kt}\mid X_t>\VaR_{t,\b}$ as
\begin{align}
	\label{eqn:PITdef}
	\widehat{U}_t^{X} = \widehat{F}_{X_t\mid\mathcal{F}_{t-1}}(X_t)
	\qquad  \text{ and } \qquad 
	\widehat{U}_t^{Y_k\mid X>\VaR} = \widehat{F}_{Y_{kt}\mid X_t>\VaR_{t,\b},\mathcal{F}_{t-1}}(Y_{kt}),
\end{align}
with ``population'' counterparts $U_t^{X} = F_{X_t\mid\mathcal{F}_{t-1}}(X_t)$ and $U_t^{Y_k\mid X>\VaR} = F_{Y_{kt}\mid X_t>\VaR_{t,\b},\mathcal{F}_{t-1}}(Y_{kt})$.
Now, our cumulative CoVaR violation sequence is related to the \emph{conditional tail PIT}
\begin{align}
	\label{eqn:CumViolPITs}
	\widetilde{H}_{kt,\a|\b} := \1_{\big\{\widehat{U}_t^{X}> \b,\ \widehat{U}_t^{Y_k\mid X>\VaR}>\a\big\}}\frac{\widehat{U}_t^{Y_k\mid X>\VaR}-\a}{1-\a}.
\end{align}
$\widetilde{H}_{kt,\a|\b}$ in \eqref{eqn:CumViolPITs} is \textit{conditional} as it involves the conditional PIT, $\widehat{U}_t^{Y_k\mid X>\VaR}$, and concerns the \textit{tail} only, as it is truncated through the indicator function.
The precise relation to $H_{kt,\a|\b}$ is given in the following proposition.

\begin{prop} 
	\label{rem:CumViolSeq}
	Suppose that $\beta \mapsto \widehat{\VaR}_{t,\b}=\widehat{F}^{-1}_{X_t\mid\mathcal{F}_{t-1}}(\b)$ and $\gamma \mapsto \widehat{\CoVaR}_{kt,\gamma|\b} = \widehat{F}^{-1}_{Y_{kt}\mid X_t>\VaR_{t,\b},\mathcal{F}_{t-1}}(\gamma)$ are continuous and strictly increasing functions in a neighbourhood around $\beta$, and for an open set containing all $\gamma \in (\alpha, 1)$, respectively.
	Then, the conditional tail PIT from \eqref{eqn:CumViolPITs} equals the cumulative CoVaR violation sequence from \eqref{eq:CCVP}, i.e., $H_{kt,\a|\b}=\widetilde{H}_{kt,\a|\b}$.
\end{prop}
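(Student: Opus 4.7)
The plan is to convert each indicator in $H_{kt,\alpha|\beta}$ into an equivalent indicator on the PITs from \eqref{eqn:PITdef}, and then evaluate the inner integral explicitly. The key algebraic engine is the standard equivalence between a random variable exceeding a quantile of its distribution and the associated PIT exceeding the corresponding probability level, which is exactly what the continuity/strict-monotonicity assumption on $\beta\mapsto\widehat{\VaR}_{t,\b}$ and $\gamma\mapsto\widehat{\CoVaR}_{kt,\gamma|\b}$ buys us.

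First, I would use the hypothesis that $\widehat{F}_{X_t\mid\mathcal{F}_{t-1}}$ is continuous and strictly increasing around $\beta$ to write
\[
\1_{\{X_t>\widehat{\VaR}_{t,\b}\}}=\1_{\big\{\widehat{F}_{X_t\mid\mathcal{F}_{t-1}}(X_t)>\b\big\}}=\1_{\{\widehat{U}_t^{X}>\b\}},
\]
and analogously, for any $\gamma\in(\alpha,1)$, the hypothesis on $\gamma\mapsto\widehat{\CoVaR}_{kt,\gamma|\b}$ yields
\[
\1_{\{Y_{kt}>\widehat{\CoVaR}_{kt,\gamma|\b}\}}=\1_{\{\widehat{U}_t^{Y_k\mid X>\VaR}>\gamma\}}.
\]
This converts the integrand in \eqref{eq:CCVP} into an indicator that depends on $\gamma$ only through a simple threshold on the fixed random variable $\widehat{U}_t^{Y_k\mid X>\VaR}$.

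Second, I would compute
\[
\int_{\a}^{1}\1_{\{\widehat{U}_t^{Y_k\mid X>\VaR}>\gamma\}}\D\gamma = \big(\widehat{U}_t^{Y_k\mid X>\VaR}-\a\big)\,\1_{\{\widehat{U}_t^{Y_k\mid X>\VaR}>\a\}},
\]
using that $\widehat{U}_t^{Y_k\mid X>\VaR}\in[0,1]$ almost surely, so that the upper endpoint never binds. Plugging this into \eqref{eq:CCVP} and factoring out $\frac{1}{1-\a}$ produces exactly $\widetilde H_{kt,\a|\b}$ in \eqref{eqn:CumViolPITs}, as the product of the two indicators $\1_{\{\widehat{U}_t^{X}>\b\}}$ and $\1_{\{\widehat{U}_t^{Y_k\mid X>\VaR}>\a\}}$ coincides with the joint indicator there.

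The only subtle point—and what I expect to be the main obstacle—is justifying the equivalence of events on a measure-theoretically clean set. The conditional CDF $\widehat{F}_{Y_{kt}\mid X_t>\VaR_{t,\b},\mathcal{F}_{t-1}}$ is only meaningful when $\{X_t>\VaR_{t,\b}\}$ has positive probability, and $\widehat{U}_t^{Y_k\mid X>\VaR}$ is a bona fide uniform PIT only on that event; outside it, its value is immaterial. The outer factor $\1_{\{\widehat{U}_t^{X}>\b\}}=\1_{\{X_t>\widehat{\VaR}_{t,\b}\}}$ precisely annihilates the ambiguous region, so the equality $H_{kt,\a|\b}=\widetilde H_{kt,\a|\b}$ holds pointwise once we observe that both sides are $0$ whenever $X_t\le\widehat{\VaR}_{t,\b}$. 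A brief check that the hypothesis delivers strict monotonicity on precisely the relevant range of $\gamma$ (uniformly in the random argument $\widehat{U}_t^{Y_k\mid X>\VaR}\in(\a,1)$) completes the argument.
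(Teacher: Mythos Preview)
Your proposal is correct and follows essentially the same route as the paper: both arguments convert the indicators $\1_{\{X_t>\widehat{\VaR}_{t,\b}\}}$ and $\1_{\{Y_{kt}>\widehat{\CoVaR}_{kt,\gamma|\b}\}}$ into indicators on the PITs $\widehat{U}_t^{X}$ and $\widehat{U}_t^{Y_k\mid X>\VaR}$ via the strict-monotonicity assumption, and then evaluate the resulting integral $\int_{\a}^{1}\1_{\{\widehat{U}_t^{Y_k\mid X>\VaR}>\gamma\}}\D\gamma$ explicitly. Your added remark that the outer indicator $\1_{\{\widehat{U}_t^{X}>\b\}}$ renders the definition of $\widehat{U}_t^{Y_k\mid X>\VaR}$ on $\{X_t\le\widehat{\VaR}_{t,\b}\}$ immaterial is a helpful clarification not made explicit in the paper.
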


Proposition~\ref{rem:CumViolSeq} is useful to derive several (testable) properties of $H_{kt,\a|\b}$ under 	$H_0^{\CoES}$, viz.~IIDness of $H_{kt,\a|\b}$ and its CDF.
We formally do so in the following proposition.

\begin{prop}
	\label{prop:CumViolProcess}
	Under $H_0^{\CoES}$, it holds for all $k \in [K]$ that
	\[
	\big\{(I_t,\, H_{kt,\alpha\mid\beta})^\prime\big\}_{t\in\mathbb{N}}\overset{d}{=}\bigg\{\Big(\1_{\{U_{1t}>\b\}},\, \1_{\{U_{1t}>\b,\ U_{2t}>\a\}}\frac{U_{2t}-\a}{1-\a}\Big)^\prime\bigg\}_{t\in\mathbb{N}},
	\]
	where $U_{it}\overset{IID}{\sim}\mathcal{U}[0,1]$ are independent of each other for  $i=1,2$ and all $t \in \mathbb{N}$.
\end{prop}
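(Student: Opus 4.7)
The plan is to reduce the statement to a classical probability integral transform (PIT) argument via Proposition~\ref{rem:CumViolSeq}, and then to transplant the peeling argument used in the proof of Proposition~\ref{prop:CoVaR null}. Concretely, I would first apply Proposition~\ref{rem:CumViolSeq} to rewrite $H_{kt,\a|\b}$ as the conditional tail PIT $\widetilde{H}_{kt,\a|\b}$ in \eqref{eqn:CumViolPITs}. Under $H_0^{\CoES}$, the forecasted conditional CDFs in \eqref{eqn:PITdef} coincide with the true ones, so $\widehat{U}_t^X = U_t^X$ and $\widehat{U}_t^{Y_k\mid X>\VaR} = U_t^{Y_k\mid X>\VaR}$, giving
\[
I_t = \1_{\{U_t^X>\b\}},\qquad H_{kt,\a|\b}=\1_{\{U_t^X>\b,\ U_t^{Y_k\mid X>\VaR}>\a\}}\,\frac{U_t^{Y_k\mid X>\VaR}-\a}{1-\a}.
\]
Continuity of $F_{X_t\mid\mathcal{F}_{t-1}}$ and of $F_{Y_{kt}\mid X_t>\VaR_{t,\b},\mathcal{F}_{t-1}}$---both ensured by the Lebesgue-density assumption on $(X_t,Y_{kt})^\prime\mid\mathcal{F}_{t-1}$---then yield $U_t^X\mid\mathcal{F}_{t-1}\sim\mathcal{U}[0,1]$ and $U_t^{Y_k\mid X>\VaR}\mid\{X_t>\VaR_{t,\b}\},\mathcal{F}_{t-1}\sim\mathcal{U}[0,1]$ by the standard conditional PIT property.

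Next, I would verify the contemporaneous equality in distribution by a case split on $I_t$. On $\{U_t^X\le\b\}$, both components vanish, matching the target on $\{U_{1t}\le\b\}$ (which has probability $\b$). On $\{U_t^X>\b\}$, the uniformity of $U_t^{Y_k\mid X>\VaR}$ gives, for $h\in[0,1)$,
\[
\Pr\big\{H_{kt,\a|\b}\le h\,\big|\,U_t^X>\b,\,\mathcal{F}_{t-1}\big\}=\a+h(1-\a),
\]
which is precisely the conditional law of $\1_{\{U_{2t}>\a\}}(U_{2t}-\a)/(1-\a)$ given $\{U_{1t}>\b\}$. Integrating out by the tower property then delivers equality of the unconditional joint laws at a single time $t$.

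To extend to the full law of the sequence $\{(I_t,H_{kt,\a|\b})^\prime\}_{t\in\mathbb{N}}$, I would mimic the peeling argument in the proof of Proposition~\ref{prop:CoVaR null}. For any $t_1<\ldots<t_N$, successive conditioning on $\mathcal{F}_{t_N-1},\ldots,\mathcal{F}_{t_1-1}$ peels off the pairs one at a time: at each step, the contemporaneous conditional distribution derived above is deterministic in $(\a,\b)$ and therefore independent of the conditioning $\sigma$-algebra, so it factors out of the expectation. This produces the product form required for independence across $t$ and identifies each marginal with the target.

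The main obstacle---compared to Proposition~\ref{prop:CoVaR null}, where matching first and second moments sufficed thanks to binary support---is that $H_{kt,\a|\b}$ is continuously distributed on $(0,1)$ conditional on $\{I_t=1,\ U_t^{Y_k\mid X>\VaR}>\a\}$, so one must identify its full conditional CDF rather than just its moments. Proposition~\ref{rem:CumViolSeq} does the heavy lifting here by converting an integral over CoVaR violation indicators into a single transformed PIT, reducing the task to standard PIT arithmetic; the remaining peeling step is then a routine measurability argument.
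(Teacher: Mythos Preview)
Your overall architecture---arrive at a true-PIT representation of $(I_t,H_{kt,\a|\b})$, then case-split on $I_t$ for the contemporaneous law, then peel for independence across $t$---matches the paper's proof. But there is a genuine imprecision in how you get to the PIT representation.

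You write that under $H_0^{\CoES}$ ``the forecasted conditional CDFs in \eqref{eqn:PITdef} coincide with the true ones, so $\widehat{U}_t^X = U_t^X$.'' This is not warranted: $H_0^{\CoES}$ only asserts $\widehat{\VaR}_{t,\beta}=\VaR_{t,\beta}$ at the \emph{single} level $\beta$ and says nothing about $\widehat{F}_{X_t\mid\mathcal{F}_{t-1}}$ elsewhere, so $\widehat{U}_t^X=U_t^X$ need not hold. Likewise, the CoVaR hypothesis only covers $\gamma\in[\alpha,1)$, so $\widehat{U}_t^{Y_k\mid X>\VaR}=U_t^{Y_k\mid X>\VaR}$ is only available on the upper tail $\{U_t^{Y_k\mid X>\VaR}\geq\alpha\}$. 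Relatedly, invoking Proposition~\ref{rem:CumViolSeq} presupposes regularity of $\beta'\mapsto\widehat{\VaR}_{t,\beta'}$ \emph{near} $\beta$, which $H_0^{\CoES}$ does not supply either.

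None of this is fatal, because $\widehat{U}_t^X$ enters $\widetilde{H}_{kt,\a|\b}$ only through $\1_{\{\widehat{U}_t^X>\beta\}}=\1_{\{X_t>\widehat{\VaR}_{t,\beta}\}}$, and under $H_0^{\CoES}$ this \emph{does} equal $\1_{\{U_t^X>\beta\}}$; and on that event the tail correspondence for the $Y$-PIT suffices for the remaining factor. The paper simply avoids the detour: it first uses $H_0^{\CoES}$ to replace $\widehat{\VaR}_{t,\beta}$ and $\widehat{\CoVaR}_{kt,\gamma|\b}$ by their true counterparts \emph{inside the integral definition} \eqref{eq:CCVP}, and only then performs the PIT manipulation with the \emph{true} CDFs---whose continuity and strict monotonicity follow from the standing density assumption. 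From the resulting representation (the paper's \eqref{eqn:Hkt_derivation}), your case-split and peeling arguments coincide with the paper's Steps~2--4.
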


Under $H_0^{\CoES}$, Proposition~\ref{prop:CumViolProcess} implies that 
\begin{align}
	\label{eqn:H_CDF}
	H(x):=\P\big\{H_{kt,\a|\b}\leq x\big\} = \big[x(1-\a) + \a\big](1-\b) + \b,\qquad x\in[0,1],
\end{align}
and $H(x) = 0$ for any $x < 0$.
The two properties of $H_{kt,\a|\b}$---its IIDness and the specific form of its CDF---allow for a feasible way of monitoring $H_{0}^{\CoES}$.
Notice that as argued in Remark~\ref{rem:VaRMonitoring} for the CoVaR, we simultaneously have to monitor the condition $\widehat{\VaR}_{t,\beta} = \VaR_{t,\b}$ in $H_0^{\CoES}$, which we do by employing the VaR detector from \eqref{eq:VaR det}.

The unconditional part of our CoES detector monitors whether the CDF of $H_{kt,\a|\b}$ equals the uniform-type CDF $H(\cdot)$ by sequentially computing the Kolmogorov--Smirnov statistic $D_{kT} = \sup_{x\in[0,1]}\big|F_{kT}(x)-H(x)\big|$, where $F_{kT}(\cdot)$ denotes the empirical CDF of $\{H_{kt,\a|\b}\}_{t=T-m+1,\ldots,T}$.
Based on these considerations, we propose the detector
\begin{align*}
	\CoES^{uc}_{k}(T) = \frac{D_{kT} - \E_{H_0^{\CoES}}[D_{kT}] }{ \sqrt{\Var_{H_0^{\CoES}}(D_{kT}) }},
\end{align*}
where $\E_{H_0^{\CoES}}[D_{kT}]$ and $\Var_{H_0^{\CoES}}(D_{kT})$  denote the null-hypothetical mean and variance of $D_{kT}$.

It remains to check IIDness of $\{H_{kt,\a|\b}\}_{t \in [n]}$.
For this, we draw on \citet{Hon96}, who proposes to assess the non-autocorrelatedness with a spectral density-based test statistic
\begin{align*}
	\CoES^{iid}_{k}(T) = \frac{M_{kT} - \E_{H_0^{\CoES}}[M_{kT}] }{ \sqrt{\Var_{H_0^{\CoES}}(M_{kT}) }},
	\qquad \text{where} \qquad 
	M_{kT} = m\sum_{j=1}^{m-1} \kappa^{2}(j/p_m) \, \widehat{\rho}_{kj}^2(T).
\end{align*}
Here, $\widehat{\rho}_{kj}(T) = \widehat{\gamma}_{kj} (T)/\widehat{\gamma}_{k0}(T)$ is the $j$-th  sample autocorrelation of the $k$-th series, where
\[
\widehat{\gamma}_{kj}(T)=m^{-1}\sum_{t=T-m+1+|j|}^{T}\big[H_{kt,\a|\b} - \overline{H}_{k,\a|\b}(T)\big] \big[H_{k,t-|j|,\a|\b} - \overline{H}_{k,\a|\b}(T)\big]
\]
is the $j$-th sample autocovariance of $\{H_{kt,\a|\b}\}_{t=T-m+1,\ldots,T}$, with $\overline{H}_{k,\a|\b}(T)=m^{-1}\sum_{t=T-m+1}^{T}H_{kt,\a|\b}$ denoting the sample mean.
Furthermore, $\kappa:\mathbb{R}\to[-1,1]$ is some kernel function and $p_m$ denotes a smoothing parameter satisfying $p_m\longrightarrow\infty$ and $p_m/m\longrightarrow0$, as $m\to\infty$.
While the choice $\kappa(z)=\1_{\{|z|\leq1\}}$ recovers the classical Box--Pierce test, we follow the power considerations of \citet[Theorem~5]{Hon96}  and use the Daniell kernel $\kappa(z)=\sin(\pi z)/[\pi z]$, $z\in\mathbb{R}$.  
We employ a ``small'' choice  $p_m= \log(m)$ as \citet{Hon96} reports that small $p_m$ lead to higher power, though possibly at the expense of some size distortions.
Yet, these potential size distortions are of no concern for our monitoring, where size is controlled in \textit{finite} samples.

Of course, many other detectors (test statistics) could potentially be used to detect deviations from independence \citep[see][for an overview]{Hon10}.
We opt for $M_T$ here because it is intuitive, computationally easy to handle and powerful, as shown by \citet{Hon96}.

Putting together the individual components, we obtain the CoES detector
\[
\CoES_{k}(T)=a\CoES^{uc}_{k}(T) + (1-a)\CoES^{iid}_{k}(T),\qquad a\in[0,1].
\]
As for the CoVaR, we simultaneously monitor calibration of the associated VaR forecasts via \eqref{eq:VaR det}.
The validity of the CoES monitoring procedure is established by the following theorem, which corresponds verbatim to Theorem~\ref{thm:CoVaR monitor}, with $\CoVaR_{k}(T)$ replaced by $\CoES_{k}(T)$.

\begin{thm}
	\label{thm:CoES monitor}
	For any $\iota \in (0,1)$, it holds that
	\begin{multline}
		\label{eqn:CoESMonitoringTypeIError}
		\P_{H_0^{\CoES}} \Big\{ \exists T \in \{m,\ldots,n\}: \quad \VaR(T) \ge v  \; \\
		\text{  or  }  \;   \CoES_{k}(T) \ge c_k \quad \text{for some $k \in [K]$}  \Big\} \le \iota,
	\end{multline}
	if the critical values $v$ and the $c_k$'s are chosen such, that
	\begin{multline}\label{eq:size ineq_CoES}	
		\P_{H_0^{\CoES}}\Big\{\sup_{T=m,\ldots,n}\VaR(T) \geq v \Big\} + \sum_{k=1}^{K}\P_{H_0^{\CoES}}\Big\{\sup_{T=m,\ldots,n}\CoES_{k}(T) \geq c_k\Big\} \\
		- \sum_{k=1}^{K}\P_{H_0^{\CoES}}\Big\{\sup_{T=m,\ldots,n}\VaR(T) \geq v,\ \sup_{T=m,\ldots,n}\CoES_{k}(T) \geq c_k \Big\}=\iota.
	\end{multline}
\end{thm}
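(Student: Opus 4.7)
The plan is to mirror, step by step, the proof of Theorem~\ref{thm:CoVaR monitor}, exploiting that the only structural ingredient that proof uses is (i) the joint law of the ``VaR part'' $I_t$ together with the ``systemic part'' $I_{kt}$ under the null, and (ii) Boole's inequality to control the union across $k \in [K]$. Here the systemic part is replaced by the cumulative CoVaR violation $H_{kt,\alpha|\beta}$, and Proposition~\ref{prop:CumViolProcess} provides the exact joint distribution of $(I_t, H_{kt,\alpha|\beta})$ under $H_0^{\CoES}$, so the same argument goes through.

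Concretely, I would let $\iota^\ast$ denote the actual probability in \eqref{eqn:CoESMonitoringTypeIError} and write
\begin{align*}
\iota^\ast
&= \P_{H_0^{\CoES}}\bigg\{\Big\{\sup_{T=m,\ldots,n}\VaR(T)\geq v\Big\}\cup \bigcup_{k=1}^{K}\Big\{\sup_{T=m,\ldots,n}\CoES_{k}(T)\geq c_k\Big\}\bigg\}\\
&= \P_{H_0^{\CoES}}\Big\{\sup_{T=m,\ldots,n}\VaR(T)\geq v\Big\} \\
&\quad + \P_{H_0^{\CoES}}\bigg\{\bigcup_{k=1}^{K}\Big\{\sup_{T=m,\ldots,n}\CoES_{k}(T)\geq c_k\Big\} \setminus \Big\{\sup_{T=m,\ldots,n}\VaR(T)\geq v\Big\}\bigg\},
\end{align*}
using the identity $\P\{A\cup B\}=\P\{A\}+\P\{B\setminus A\}$ exactly as in the CoVaR proof.

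Next, I would invoke Boole's inequality on the remaining union over $k$, and then rewrite each set-difference probability as $\P\{C_k\}-\P\{C_k\cap A\}$, where $C_k=\{\sup_T\CoES_{k}(T)\geq c_k\}$ and $A=\{\sup_T\VaR(T)\geq v\}$. This yields
\begin{align*}
\iota^\ast &\leq \P_{H_0^{\CoES}}\Big\{\sup_{T=m,\ldots,n}\VaR(T)\geq v\Big\} + \sum_{k=1}^{K}\P_{H_0^{\CoES}}\Big\{\sup_{T=m,\ldots,n}\CoES_{k}(T)\geq c_k\Big\}\\
&\quad - \sum_{k=1}^{K}\P_{H_0^{\CoES}}\Big\{\sup_{T=m,\ldots,n}\VaR(T)\geq v,\ \sup_{T=m,\ldots,n}\CoES_{k}(T)\geq c_k\Big\}.
\end{align*}
By the assumption in \eqref{eq:size ineq_CoES}, the right-hand side equals $\iota$, which proves \eqref{eqn:CoESMonitoringTypeIError}.

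The only conceptual point that needs attention is the justification for Boole's inequality across $k$: under $H_0^{\CoES}$, Proposition~\ref{prop:CumViolProcess} fixes the joint law of $(I_t,H_{kt,\alpha|\beta})$ for each individual $k$ but, as with the CoVaR case, stays silent about the cross-sectional dependence between $H_{kt,\alpha|\beta}$ and $H_{k't,\alpha|\beta}$ for $k\neq k'$. Hence the Bonferroni-type bound is used precisely to remove this ambiguity, while the negative intersection correction refines it using the joint law of $(I_t,H_{kt,\alpha|\beta})$, which \emph{is} known. There is therefore no real obstacle beyond transporting the CoVaR argument verbatim; the continuous rather than binary nature of the CoES detector does not enter the set-theoretic reasoning, and the existence of critical values realising (or approximating) the equality in \eqref{eq:size ineq_CoES} is guaranteed by simulation from the distribution in Proposition~\ref{prop:CumViolProcess}, in direct analogy with Algorithm~\ref{algo:1}.
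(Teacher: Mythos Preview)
Your proposal is correct and matches the paper's approach exactly: the paper's proof simply states that the argument is ``[a]nalogous to that of Theorem~\ref{thm:CoVaR monitor}, where $\CoES_{k}(T)$ simply replaces $\CoVaR_{k}(T)$ at every occurrence,'' which is precisely what you have carried out in detail. Your additional remarks on why Proposition~\ref{prop:CumViolProcess} plays the role of Proposition~\ref{prop:CoVaR null} and why the Bonferroni step is needed are accurate elaborations but not required for the formal argument.
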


\begin{proof}
Analogous to that of Theorem~\ref{thm:CoVaR monitor}, where $\CoES_{k}(T)$ simply replaces $\CoVaR_{k}(T)$ at every occurrence.
\end{proof}

As discussed after Theorem~\ref{thm:CoVaR monitor} for the CoVaR monitoring, we advocate choosing $v$ and $c_k$ to correspond to some $(1-\nu)$-quantiles of the null distributions of $\sup_{T=m,\ldots,n}\VaR(T)$ and $\sup_{T=m,\ldots,n}\CoES_{k}(T)$, respectively. 
These critical values (denoted $v^{\CoES}$ and $c^{\CoES}$) are computed by following Algorithm~\ref{algo:2} in Appendix~\ref{sec:Algorithms}.

By taking $\a=0$ in Theorem~\ref{thm:CoES monitor}, we obtain a monitoring scheme for the MES, since $\MES_{kt,\b}=\CoES_{kt,0|\b}$. 
In that case, we write the CoES detector $\CoES_{k}(T)$ simply as $\MES_{k}(T)$ with appertaining  critical values $v^{\MES}$ and $c^{\MES}$.

\begin{rem}[Conditional tail monitoring]
	\label{rem:CondTailMonitoring}
	Formally, our monitoring procedure for the CoES requires conditional tail forecasts to compute \eqref{eqn:PITdef}--\eqref{eqn:CumViolPITs}.
	In the strict sense of \citet[Definition~1.1]{BD22}, this implies that we essentially monitor the conditional tail distribution beyond the CoVaR.
	However, as argued by \citet{GordyMcNeil2020} and \citet{Hue2024backtesting}, (tail) PITs are a ``middle ground'' between reporting risk measures and (tail) distributions and their reporting is already mandatory for the US banking system; see \citet[p.~53105]{FederalRegister2012}:
	In detail, while reporting full (tail) distributions would allow to reach conclusions about the banks' internal risk models, reporting of (tail) PITs is sufficient for many evaluation metrics, yet at the same time conceals the confidential tail distribution forecasts.
	While forecasts for the tail distribution can and should of course be submitted before the return materializes, the PITs also rely on the corresponding observation and, hence, are only available \emph{ex post}. 
	A fair PIT reporting mechanism could be set up by submitting the tail distributions to a ``trustee'' before the return materializes, and the trustee then only reports the materialized PIT value (according to the observed return).
	Such a procedure would allow for a proper tail forecast evaluation, while maintaining confidentiality of the full predictive (tail) distributions.
\end{rem}

\subsection{Simulations for the CoES and MES}
\label{sec:CoESSimulations}

\begin{table}[tb]
	\centering
	\resizebox{\linewidth}{!}{
		\small
		\begin{tabular}{ccccc ccccc ccccc ccccc ccccc} 
			\toprule
			Measure && $\beta$ &  & $K$ &  & Joint &  & VaR & &  \multicolumn{10}{c}{CoES/MES for series number $k$} \\
			\cmidrule{11-20}  
			&&&&&&&&&& 1 & 2 & 3 & 4 & 5 & 6 & 7 & 8 & 9 & 10 \\ 
			\midrule
			\multirow{8.7}{*}{\shortstack{CoES: \\ $\alpha = \beta$}} && \multirow{4}{*}{0.9} &  & 1 &  & 10.56 &  & 5.28 && 5.28 &  &  &  &  &  &  &  &  &  \\ 
			&&  &  & 2 &  & 8.42 &  & 3.24 && 2.54 & 2.68 &  &  &  &  &  &  &  &  \\ 
			&& &  & 5 &  & 8.16 &  & 1.54 && 1.38 & 1.10 & 1.56 & 1.30 & 1.40 &  &  &  &  &  \\ 
			&&  &  & 10 &  & 8.70 &  & 0.82 && 0.88 & 0.64 & 0.78 & 0.86 & 0.84 & 1.04 & 1.02 & 0.82 & 0.74 & 0.56 \\ 
			\cmidrule{3-20}   
			&& \multirow{4}{*}{0.95} &  & 1 &  & 9.58 &  & 4.32 && 5.26 &  &  &  &  &  &  &  &  &  \\ 
			&&  &  & 2 &  & 9.94 &  & 3.48 && 3.22 & 3.36 &  &  &  &  &  &  &  &  \\ 
			&& &  & 5 &  & 9.38 &  & 1.44 && 1.86 & 1.74 & 1.76 & 2.02 & 1.68 &  &  &  &  &  \\ 
			&&  &  & 10 &  & 8.00 &  & 0.72 && 0.78 & 0.78 & 0.70 & 0.90 & 1.06 & 0.70 & 0.82 & 1.06 & 1.16 & 0.88 \\ 
			\midrule 
			\multirow{8.7}{*}{\shortstack{MES: \\ $\alpha = 0$}} && \multirow{4}{*}{0.9} &  & 1 &  & 10.60 &  & 5.52 && 5.12 &  &  &  &  &  &  &  &  &  \\ 
			&&  &  & 2 &  & 9.04 &  & 3.16 && 3.12 & 2.96 &  &  &  &  &  &  &  &  \\ 
			&& &  & 5 &  & 8.18 &  & 1.44 && 1.50 & 1.44 & 1.76 & 1.12 & 1.24 &  &  &  &  &  \\ 
			&& &  & 10 &  & 5.16 &  & 0.76 && 0.36 & 0.48 & 0.54 & 0.38 & 0.48 & 0.38 & 0.60 & 0.48 & 0.48 & 0.56 \\ 
			\cmidrule{3-20}  
			&& \multirow{4}{*}{0.95}&  & 1 &  & 9.06 &  & 4.24 && 4.82 &  &  &  &  &  &  &  &  &  \\ 
			&&  &  & 2 &  & 10.44 &  & 3.52 && 3.82 & 3.42 &  &  &  &  &  &  &  &  \\ 
			&& &  & 5 &  & 7.26 &  & 1.62 && 1.42 & 1.38 & 1.12 & 1.32 & 1.36 &  &  &  &  &  \\ 
			&&  &  & 10 &  & 5.88 &  & 0.68 && 0.48 & 0.64 & 0.62 & 0.60 & 0.60 & 0.66 & 0.72 & 0.68 & 0.64 & 0.66 \\ 
			\bottomrule
		\end{tabular}
	}
	\caption{Joint and disaggregated \textit{first} rejection rates (in percent) of our CoES and MES surveillance procedure under the null hypothesis of correctly specified forecasts. 
		Results are reported for a nominal level of $\iota = 10\%$, for $K \in \{1,2,5,10\}$ and $\beta \in \{0.9, 0.95\}$, with $\alpha = \beta$ for the CoES, and $\alpha = 0$ for the MES.
		The column ``VaR'' reports the percentage of cases in which the VaR detector raised the first alarm, while the columns labeled ``1''--``10'' report the corresponding first-alarm rates of the respective CoVaR detectors.}
	\label{tab:SizeCoES}
\end{table}

\begin{figure}[tb]
	\centering
	\includegraphics[width=\textwidth]{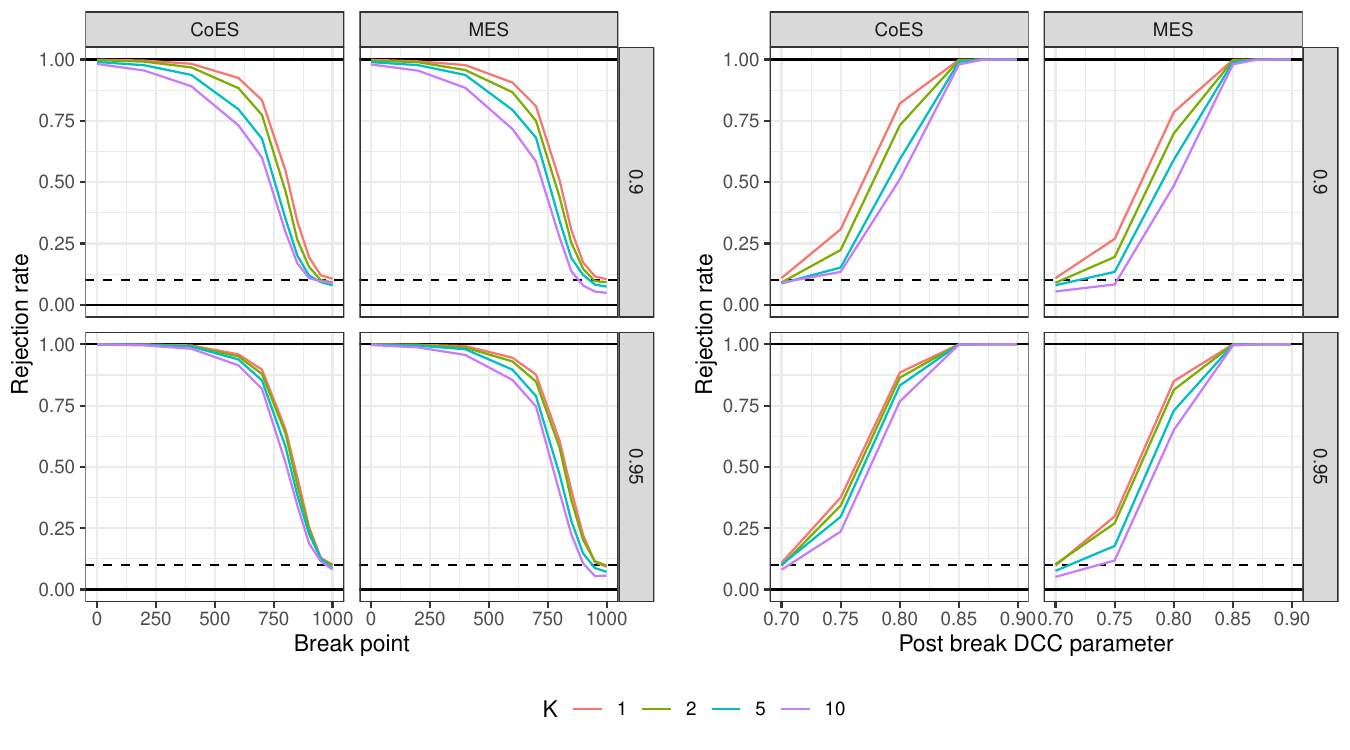}
	\caption{Rejection rates of our CoES and MES surveillance methods plotted against the break point in \eqref{eqn:SimParamMisspec} in the left panel and against the post-break DCC parameter in the right panel.
	In both plots, we consider $\alpha = \beta \in \{0.9, 0.95\}$ and $K \in \{1,2,5,10\}$.}
	\label{fig:PowerCoES}
\end{figure}

\begin{figure}[tb]
	\centering
	\includegraphics[width=0.75\textwidth]{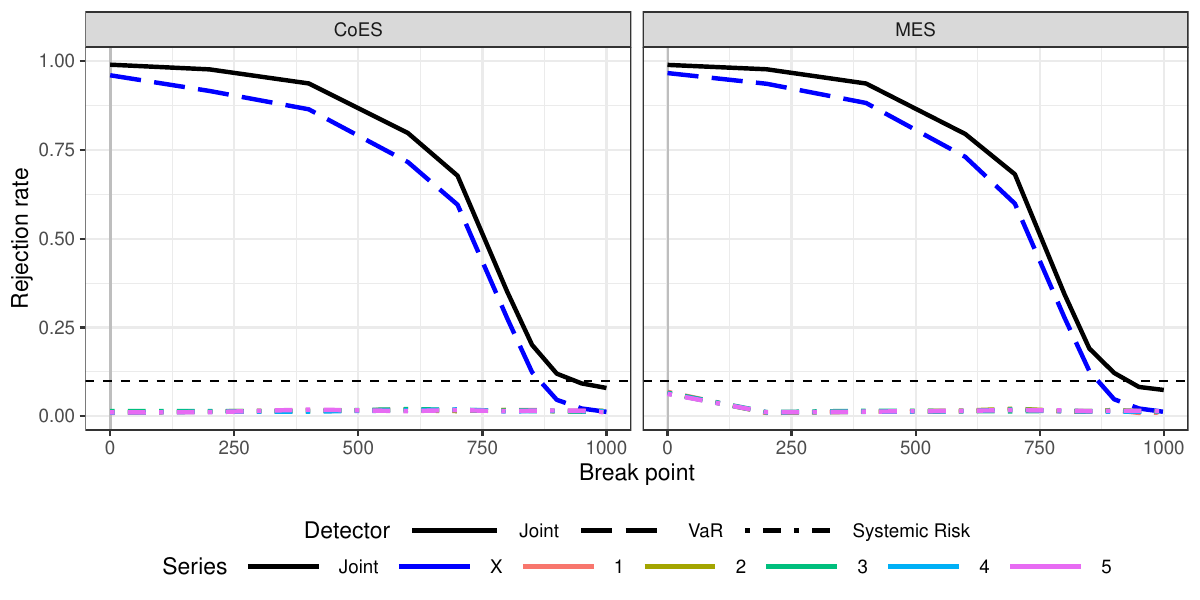}
	\caption{Rejection rates of the joint CoES and MES procedure in black, and frequencies how often the individual detectors are the \emph{first} to generate a detection in colors in the setting of the left panel of Figure~\ref{fig:PowerCoVaR}.
		Dashed lines depict ``first'' rejection rates for the VaR detector, and dot-dashed lines for the systemic risk detectors. The colors indicate the respective component of $\mW_t$. 
		The nominal level of $\iota = 10\%$ is indicated by the dashed horizontal line.}
	\label{fig:PowerCoESIndividual}
\end{figure}

\begin{figure}[tb]
	\centering
	\includegraphics[width=\textwidth]{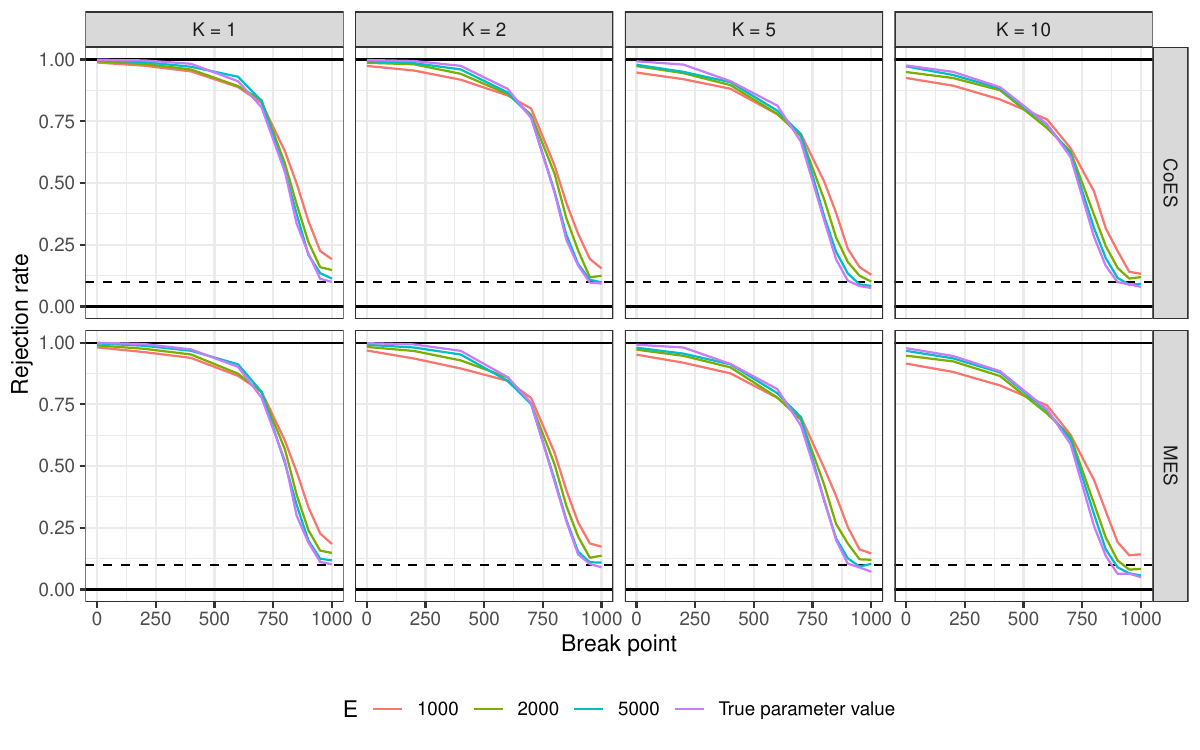}
	\caption{Rejection rates of our CoES and MES surveillance methods plotted against the break point for different estimation window lengths $E$, where $E = \infty$ implies the use of the correct parameters.
		We consider $K \in \{1,2,5,10\}$, and further keep $n=1000$, $t^\ast = 0$,  $\beta_\text{post} = 0.85$ and $\alpha = \beta = 0.9$ fixed.}
	\label{fig:PowerEstWindowCoES}
\end{figure}

Here, we extend the simulations from Section~\ref{Simulations} to the CoES and MES.
We use the same DGP as in Section~\ref{Data-Generating Process} and generate covariance matrix forecasts as described in Section~\ref{sec:SystemicRiskForecasts}.
Based on the implied forecasted Student's $t$ distribution (nesting the Gaussian special case for $\nu = \infty$), we obtain the CoES (and MES) sequence $H_{kt,\a|\b}$ by noting that
\begin{align*}
	H_{kt,\a|\b} = \frac{\1_{\{X_t> \widehat{\VaR}_{t,\beta}\}}}{1-\a}\int_{\a}^{1} \1_{\{Y_{kt}> \widehat{\CoVaR}_{kt,\gamma|\b}\}} \D \gamma 
	= \frac{\1_{\{X_t> \widehat{\VaR}_{t,\beta}\}}}{1-\a}  \1_{\{\xi_{kt}(Y_{kt}) \ge \alpha\}} \big( \xi_{kt}(Y_{kt}) - \alpha \big),
\end{align*}
where $\xi_{kt}(y_{kt})$ is obtained for any realization $y_{kt}$ of the random variable $Y_{kt}$ as
\begin{align*}
	\xi_{kt}(y_{kt}) = \mathbb{P} \big\{ \widetilde{X}_t > \widehat{\VaR}_{t,\b}, \widetilde{Y}_{kt} > y_{kt} \; \big| \; \mathcal{F}_{t-1} \big\} \big/ (1-\beta),
\end{align*}
where $(\widetilde{X}_t, \widetilde{Y}_{kt})^\prime \mid \mathcal{F}_{t-1} \sim t_\nu(\widehat{\mH}_{t,k})$ follow the same law as $(X_t, Y_{kt})^\prime  \mid \mathcal{F}_{t-1}$.

Table~\ref{tab:SizeCoES} displays results as in Table~\ref{tab:SizeCoVaR}, but for the CoES and MES instead of the CoVaR.
The results are almost identical: ``Joint'' size is held exactly for $K=1$ and slightly conservatively for $K \in \{2,5,10\}$ with an equal distribution among the VaR and CoES/MES detectors.

The analysis of our CoES and MES monitoring procedures' power in Figures~\ref{fig:PowerCoES}--\ref{fig:PowerEstWindowCoES} follows Figures~\ref{fig:PowerCoVaR}--\ref{fig:PowerEstWindowCoVaR}, but displays rejection rates of the CoES and MES instead of for the CoVaR and RCoVaR.
The results are again very similar to the CoVaR: Figure~\ref{fig:PowerCoES} displays good power under misspecification, and Figure~\ref{fig:PowerEstWindowCoES} a similar behavior under parameter estimation noise.
The rates when the CoES/MES detectors reject first in Figure~\ref{fig:PowerCoESIndividual}, however, is considerably smaller for the CoES and MES than for the CoVaR.
This might be caused by the more complicated construction of the detectors for the CoES than for the CoVaR due to the non-binary CoES identification functions.
Notice however that Figure~\ref{fig:PowerCoESIndividual} displays which detector rejects \emph{first}, and not whether a given detector rejects \emph{at all}.

\FloatBarrier

\section{Algorithms}
\label{sec:Algorithms}

Here, we describe how critical values for the Reverse CoVaR, the CoES and MES are computed by modifying Algorithm~\ref{algo:1}.

For the Reverse CoVaR and Theorem~\ref{thm:CoVaR rev monitor}, we compute the critical values as follows:
\begin{algorithm}
	\label{algo:RevCoVaR}
	To compute critical values for Theorem~\ref{thm:CoVaR rev monitor} proceed as follows:
	\begin{enumerate}
		\item
		\label{it:RevCoVaR1} 
		Generate a large number $B$ of mutually independent samples $\big\{U_{1t}\overset{\text{IID}}{\sim}\mathcal{U}[0,1]\big\}_{t \in [n]}$ and $\big\{U_{2t}\overset{\text{IID}}{\sim}\mathcal{U}[0,1]\big\}_{t \in [n]}$.

		\item
		\label{it:RevCoVaR2} 
		\sloppy
		Compute the $B$ sequences $\big\{I_{t}^\ast=\1_{\{U_{1t} > \b\}}\big\}_{t \in [n]}$ and $\big\{I_{1t}^{\ast}=\1_{\{U_{1t} > \b,\ U_{2t}>\a \}}\big\}_{t \in [n]}$.
		
		\item
		\label{it:RevCoVaR3} 
		For $b=1,\ldots,B$, calculate:
		\begin{itemize}
			\item[i.] 
			$\sup_{T=m,\ldots,n} \VaR^{b}(T)$, where $\VaR^{b}(T)$ is defined as $\VaR(T)$, except that $\{I_t\}$ is replaced by the $b$-th sample from $\{I_t^{\ast}\}$ from step \ref{it:2} of this algorithm;
			
			\item[ii.] 
			$\sup_{T=m,\ldots,n}\RCoVaR^{b}(T)$, where $\RCoVaR^{b}(T)$ is defined as $\RCoVaR_{1}(T)$, except that $\{I_{1t}\}$ is replaced by the $b$-th sample from $\{I_{1t}^{\ast}\}$ from step \ref{it:2} of this algorithm.
		\end{itemize}
		
		\item 
		On a fine grid for $\nu\in[0,1]$, compute the empirical $(1-\nu)$-quantiles of 
		\begin{itemize}
			\item[i.] $\big\{\sup_{T=m,\ldots,n} \VaR^{b}(T)\big\}_{b \in [B]}$, which we denote by $v(\nu)$, and 
			\item[ii.] $\big\{\sup_{T=m,\ldots,n} \RCoVaR^{b}(T)\big\}_{b \in [B]}$, which we denote by $c(\nu)$.
		\end{itemize}
		
		\item 
		Find the value $\nu \in [0,1]$ for which
		\begin{align*}
			  \frac{K}{B}\sum_{b=1}^{B} \1_{\big\{\sup_{T=m,\ldots,n}\VaR^{b}(T)\geq v(\nu) \; \vee \;\sup_{T=m,\ldots,n} \RCoVaR^{b}(T)\geq c(\nu) \big\}}
		\end{align*}
		is closest to (or smaller than) $\iota$; cf.~\eqref{eq:size ineq rev}. 
		The appertaining critical values will be denoted by
		$v^{\RCoVaR}$ and $c^{\RCoVaR}$.
	\end{enumerate}
\end{algorithm}

\begin{algorithm}
	\label{algo:2}
	To compute critical values for Theorem~\ref{thm:CoES monitor} proceed as follows:
	\begin{enumerate}
		\item
		\label{it:21} 
		Generate a large number $B$ of mutually independent samples $\big\{U_{1t}\overset{\text{IID}}{\sim}\mathcal{U}[0,1]\big\}_{t \in [n]}$ and $\big\{U_{2t}\overset{\text{IID}}{\sim}\mathcal{U}[0,1]\big\}_{t \in [n]}$.

		\item 
		\label{it:22} 
		Compute the $B$ sequences $\big\{I_{t}^\ast=\1_{\{U_{1t} > \b\}}\big\}_{t \in [n]}$ and $\big\{H_{1t,\a|\b}^{\ast}=\1_{\{U_{1t} > \b,\ U_{2t}>\a \}}(U_{2t}-\a)/(1-\a)\big\}_{t \in [n]}$.
		
		\item
		\label{it:23} 
		For $b=1,\ldots,B$, calculate:
		\begin{itemize}
			\item[i.] 
			$\sup_{T=m,\ldots,n}\VaR^{b}(T)$, where $\VaR^{b}(T)$ is defined as $\VaR(T)$, except that $\{I_t\}$ is replaced by the $b$-th sample from $\{I_t^{\ast}\}$ from step \ref{it:22} of this algorithm;
			
			\item[ii.] 
			$\sup_{T=m,\ldots,n}\CoES^{b}(T)$, where $\CoES^{b}(T)$ is defined as $\CoES_{1}(T)$, except that $\{H_{1t,\a|\b}\}$ is replaced by the $b$-th sample from $\{H_{1t,\a|\b}^{\ast}\}$ from step \ref{it:22} of this algorithm.
		\end{itemize}
		
		\item On a fine grid for $\nu\in[0,1]$, compute the empirical $(1-\nu)$-quantiles of 
		\begin{itemize}
			\item[i.] $\big\{\sup_{T=m,\ldots,n}\VaR^{b}(T)\big\}_{b \in [B]}$, which we denote by $v(\nu)$ and 
			\item[ii.] $\big\{\sup_{T=m,\ldots,n}\CoES^{b}(T)\big\}_{b \in [B]}$, which we denote by $c(\nu)$.
		\end{itemize}
		
		\item Find the value $\nu \in [0,1]$ for which
		\begin{multline*}
			\frac{1}{B}\sum_{b=1}^{B}\1_{\big\{\sup_{T=m,\ldots,n}\VaR^{b}(T)\geq v(\nu) \big\}} + \frac{K}{B}\sum_{b=1}^{B}\1_{\big\{\sup_{T=m,\ldots,n}\CoES^{b}(T)\geq c(\nu) \big\}} \\
			- \frac{K}{B}\sum_{b=1}^{B} \1_{\big\{\sup_{T=m,\ldots,n} \VaR^{b}(T)\geq v(\nu),\ \sup_{T=m,\ldots,n} \CoES^{b}(T)\geq c(\nu) \big\}}
		\end{multline*}
		is closest to (or smaller than) $\iota$; cf.~\eqref{eq:size ineq_CoES}.
		The appertaining critical values will be denoted by $v^{\CoES}$ and $c^{\CoES}$.
	\end{enumerate}
\end{algorithm}

As above, the validity of the resulting critical values from Algorithm~\ref{algo:2} relies on the fact that the sequence $\big\{(I_t^\ast,H_{1t,\a|\b}^{\ast})^\prime\big\}_{t\in[n]}$ has the same probabilistic properties as $\big\{(I_t, H_{kt,\a|\b})^\prime\big\}_{t\in[n]}$ has under the null.
Hence, the critical values $v^{\CoES}$ and $c^{\CoES}$ from Algorithm~\ref{algo:2} can be computed with arbitrary precision by choosing $B$ sufficiently large.

\section{Proofs for CoES and MES Monitoring}
\label{sec:Proofs}

\begin{proof}[\textbf{Proof of Proposition~\ref{rem:CumViolSeq}:}]
	The proof follows from the following calculation, where we write \eqref{eq:CCVP} as
	\begin{align*}
		H_{kt,\a|\b} 
		&= 	\frac{\1_{\{X_t> \widehat{\VaR}_{t,\beta}\}}}{1-\a}\int_{\a}^{1} \1_{\{Y_{kt}> \widehat{\CoVaR}_{kt,\gamma|\b}\}} \D \gamma  \\
		&= \frac{\1_{\big\{X_t > \widehat{F}_{X_t\mid\mathcal{F}_{t-1}}^{-1}(\b)\big\}}}{1-\a}
		\int_{\a}^{1} \1_{ 		\big\{ Y_{kt} > \widehat{F}^{-1}_{ Y_{kt} \mid X_t \geq \VaR_{t,\b}, \mathcal{F}_{t-1} }(\gamma) \big\} }  \D \gamma  	\\
		&= \frac{\1_{\big\{ \widehat{F}_{X_t\mid\mathcal{F}_{t-1}}(X_t) > \b  \big\}}}{1-\a}	\int_{\a}^{1}\1_{\big\{ \widehat{F}_{Y_{kt}\mid X_t>\VaR_{t,\beta},\mathcal{F}_{t-1}}(Y_{kt})> \gamma \big\}} \D \gamma\\
		&= \frac{\1_{\big\{\widehat{U}_t^{X}> \b\big\}}}{1-\a} \int_{\a}^{1}\1_{\big\{\widehat{U}_t^{Y_k\mid X>\VaR}> \gamma\big\}} \D \gamma\\
		&= \frac{\1_{\big\{\widehat{U}_t^{X}> \b\big\}}}{1-\a} \1_{\big\{\widehat{U}_t^{Y_k\mid X>\VaR}>\a\big\}}\big(\widehat{U}_t^{Y_k\mid X>\VaR}-\a\big)\\
		&= \1_{\big\{\widehat{U}_t^{X}> \b,\ \widehat{U}_t^{Y_k\mid X>\VaR}>\a\big\}}\frac{\widehat{U}_t^{Y_k\mid X>\VaR}-\a}{1-\a} \\
		&= \widetilde{H}_{kt,\a|\b}.
	\end{align*}
	Here, the second  and forth equalities use the definitions of the VaR, CoVaR and of the PITs in \eqref{eqn:PITdef}, and the third equality that $F(F^{-1}(z)) = z$ if $F$ is continuous and strictly increasing around $F^{-1}(z)$.
	The fifth equality follows from computing the integral over the indicator function.
\end{proof}

\begin{proof}[\textbf{Proof of Proposition~\ref{prop:CumViolProcess}:}]
Before proceeding, we collect two useful results.
First, $U_t^{Y_k\mid X>\VaR}\mid\{ U_t^X>\b,\mathcal{F}_{t-1}\}\sim\mathcal{U}[0,1]$, because for $x\in(0,1)$,
\begin{align}
	\P\big\{U_t^{Y_k\mid X>\VaR}\leq x\mid U_t^X>\b,\mathcal{F}_{t-1}\big\} &= \P\big\{F_{Y_{kt}\mid X_t>\VaR_{t,\b},\mathcal{F}_{t-1}}(Y_{kt})\leq x\mid X_t>\VaR_{t,\b},\mathcal{F}_{t-1}\big\} \notag \\
	&= \P\big\{Y_{kt}\leq F_{Y_{kt}\mid X_t>\VaR_{t,\b},\mathcal{F}_{t-1}}^{-1}(x)\mid X_t>\VaR_{t,\b},\mathcal{F}_{t-1}\big\} \notag \\
	&= F_{Y_{kt}\mid X_t>\VaR_{t,\b},\mathcal{F}_{t-1}}\big(F_{Y_{kt}\mid X_t>\VaR_{t,\b},\mathcal{F}_{t-1}}^{-1}(x)\big) \notag \\
	&=x. \label{eq:U_cond_Unif}
\end{align}
The statement~\eqref{eq:U_cond_Unif} extends to all $x\in[0,1]$ as $U_t^{Y_k\mid X>\VaR} \in [0,1]$ almost surely by definition.

Second, for a uniform random variable $U\sim\mathcal{U}[0,1]$ and $x\in[0,1]$, using the law of total probability,
	\begin{align}
		&\P\Big\{\1_{\{U>\a\}}\frac{U-\a}{1-\a}\leq x\Big\} \notag\\
		&= \P\Big\{\1_{\{U>\a\}}\frac{U-\a}{1-\a}\leq x \mid U>\a\Big\} \P\{U>\a\} 
		+ \P\Big\{\1_{\{U>\a\}}\frac{U-\a}{1-\a}\leq x \mid U\leq\a\Big\}\P\{U\leq\a\} \notag\\
		&= \P\Big\{\frac{U-\a}{1-\a}\leq x \mid U>\a\Big\}  \P\{U>\a\}  +  \P\Big\{0\leq x \mid U\leq\a\Big\}\a\notag\\
		&= \P\Big\{U\leq \a+x(1-\a) \mid U>\a\Big\}  \P\{U>\a\}  + \a \notag\\
		&= \P\Big\{\a < U\leq \a+x(1-\a) \Big\} + \a \notag\\
		&= x(1-\a) + \a. \label{eq:Unif trafo}
	\end{align}

Similarly as in the proof of Proposition~\ref{prop:CoVaR null}, we establish Proposition~\ref{prop:CumViolProcess} in four steps.
First, we show that $I_t\overset{d}{=}\1_{\{U_{1t}>\b\}}$. 
This, however, follows as in the proof of Proposition~\ref{prop:CoVaR null}.

Second, we establish that
\[
H_{kt,\a|\b}\overset{d}{=}\1_{\{U_{1t}>\b,\ U_{2t}>\a\}}\frac{U_{2t}-\a}{1-\a}.
\]
Similar to the proof of Proposition~\ref{rem:CumViolSeq}, but here under the stronger assumption of correctly specified forecasts, $\widehat{\VaR}_{t,\b}=F^{-1}_{X_t\mid\mathcal{F}_{t-1}}(\b)$ and $\widehat{\CoVaR}_{kt,\a|\b} = F^{-1}_{Y_{kt}\mid X_t>\VaR_{t,\b},\mathcal{F}_{t-1}}(\gamma)$ for any $\gamma \ge \alpha$ under $H_{0}^{\CoES}$,  we obtain that
\begin{align}
	H_{kt,\a|\b} 
	&= 	\frac{\1_{\{X_t> \widehat{\VaR}_{t,\beta}\}}}{1-\a}\int_{\a}^{1} \1_{\{Y_{kt}> \widehat{\CoVaR}_{kt,\gamma|\b}\}} \D \gamma \nonumber \\
	&= 	\frac{\1_{\{X_t> {\VaR}_{t,\beta}\}}}{1-\a}\int_{\a}^{1} \1_{\{Y_{kt}> {\CoVaR}_{kt,\gamma|\b}\}} \D \gamma  \nonumber \\
	&= \frac{\1_{\big\{X_t > {F}_{X_t\mid\mathcal{F}_{t-1}}^{-1}(\b)\big\}}}{1-\a}
	\int_{\a}^{1} \1_{ \big\{ Y_{kt} > {F}^{-1}_{ Y_{kt} \mid X_t > \VaR_{t,\b}, \mathcal{F}_{t-1} }(\gamma) \big\} }  \D \gamma  \nonumber 	\\
	&= \frac{\1_{\big\{ {F}_{X_t\mid\mathcal{F}_{t-1}}(X_t) > \b  \big\}}}{1-\a}	\int_{\a}^{1}\1_{\big\{ {F}_{Y_{kt}\mid X_t>\VaR_{t,\beta},\mathcal{F}_{t-1}}(Y_{kt})> \gamma \big\}} \D \gamma \nonumber \\
	&= \frac{\1_{\big\{{U}_t^{X}> \b\big\}}}{1-\a} \int_{\a}^{1}\1_{\big\{{U}_t^{Y_k\mid X>\VaR}> \gamma\big\}} \D \gamma \nonumber \\
	&= \1_{\big\{{U}_t^{X}> \b,\ {U}_t^{Y_k\mid X>\VaR}>\a\big\}}\frac{{U}_t^{Y_k\mid X>\VaR}-\a}{1-\a} \label{eqn:Hkt_derivation}.
\end{align}
Hence, we may write for all $x \in [0,1]$,
\begin{align}
	\P&\big\{H_{kt,\a|\b}\leq x \mid\mathcal{F}_{t-1}\big\}\notag\\
	&= \P\big\{H_{kt,\a|\b}\leq x \mid U_t^X>\beta,\mathcal{F}_{t-1}\big\}\P\big\{U_t^X>\beta\mid\mathcal{F}_{t-1}\big\}\notag\\
	&\hspace{2cm} + \P\big\{H_{kt,\a|\b}\leq x \mid U_t^X\leq\beta,\mathcal{F}_{t-1}\big\}\P\big\{U_t^X\leq\beta\mid\mathcal{F}_{t-1}\big\}\notag\\
	&= \P\bigg\{ \1_{\big\{U_t^{Y_k\mid X>\VaR}>\a\big\}}\frac{U_t^{Y_k\mid X>\VaR}-\a}{1-\a}\leq x\,\Big\vert\, U_t^X>\beta,\mathcal{F}_{t-1}\bigg\}(1-\b)\notag\\
	&\hspace{2cm}+ 1\cdot\b\notag\\
	& = \big[x(1-\a) + \a\big](1-\b)+\b,\label{eq:h1}
\end{align}
where we have exploited in the final step that $U_t^{Y_k\mid X>\VaR}\mid\{ U_t^X>\b,\mathcal{F}_{t-1}\}\sim\mathcal{U}[0,1]$ from \eqref{eq:U_cond_Unif} and the CDF from \eqref{eq:Unif trafo}.
By the LIE, we obtain from \eqref{eq:h1} that
\begin{equation*}
	\P\big\{H_{kt,\a|\b}\leq x\big\}=\big[x(1-\a) + \a\big](1-\b)+\b=H(x),
\end{equation*}
as claimed.

Third, we show that
\[
	(I_t, H_{kt,\a|\b})^\prime\overset{d}{=}\Big(\1_{\{U_{1t}>\b\}}, \1_{\{U_{1t}>\b,\ U_{2t}>\a\}}\frac{U_{2t}-\a}{1-\a}\Big)^\prime.
\]
To do so, we prove that for $x\in[0,1]$,
	\begin{align*}
		\P\big\{H_{kt,\a|\b}\leq x\mid I_t=0\big\} & = 1, \\
		\P\big\{H_{kt,\a|\b}\leq x\mid I_t=1\big\} & = x(1-\alpha) + \alpha.
	\end{align*}
	This fully determines the dependence structure between $H_{kt,\a|\b}$ and $I_t$, because for $y\in\{0,1\}$, by the law of total probability,
\begin{multline*}
	\P\big\{H_{kt,\a|\b}\leq x,\ I_t=y\} = \P\big\{H_{kt,\a|\b}\leq x\mid I_t=0\big\}\P\big\{I_t=0\big\}\\
	+ \P\big\{H_{kt,\a|\b}\leq x\mid I_t=1\big\}\P\big\{I_t=1\big\},
\end{multline*}
where the marginal probabilities $\P\big\{I_t=0\big\}$ and $\P\big\{I_t=1\big\}$ are already known from the first step of this proof.
We follow similar steps as in \eqref{eq:h1} and use \eqref{eqn:Hkt_derivation} to deduce that
\begin{align*}
	\P\big\{H_{kt,\a|\b}\leq x\mid I_t=0\big\} & = 1,\\
	\P\big\{H_{kt,\a|\b}\leq x\mid I_t=1\big\} & = \P\big\{{H}_{kt,\a|\b}\leq x\mid U_t^X>\beta\big\}\\
	&= \E\Bigg[\P\bigg\{ \1_{\big\{U_t^{Y_k\mid X>\VaR}>\a\big\}}\frac{U_t^{Y_k\mid X>\VaR}-\a}{1-\a}\leq x\,\Big\vert\, U_t^X>\beta,\mathcal{F}_{t-1}\bigg\}\Bigg]\\
	&=x(1-\alpha) + \alpha.
\end{align*}
This is the same conditional distribution as that of $\1_{\{U_{1t}>\b,\ U_{2t}>\a\}}(U_{2t}-\a)/(1-\a)\mid\1_{\{U_{1t}>\b\}}$, as is easy to verify.

In the fourth step, we show IIDness of $(I_t, H_{kt,\a|\b})^\prime$ under correct specification.
To do so, put $H_t:=H_{kt,\a|\b}$ for brevity.
We start by showing independence of $I_s$ and $H_{t}$ for $s\neq t$.
For $s>t$, we deduce that
\begin{align}
	\P\big\{I_s=x,\ H_{t}\leq y\mid\mathcal{F}_{s-1}\big\}&=\1_{\{H_{t}\leq y\}}\P\big\{I_s= x\mid\mathcal{F}_{s-1}\big\}\notag\\
	&= \1_{\{H_{t}\leq y\}}\big[\b\1_{\{x=0\}} + (1-\b)\1_{\{x=1\}}\big],\label{eq:CCDF H}
\end{align}
where the first step follows from $\mathcal{F}_{s-1}$-measurability of $H_{t}$, and the second step from \eqref{eq:(p.39)}.
Hence, 
\begin{align}
	\P\big\{I_s=x,\ H_{t}\leq y\big\} \notag
	&= \E\Big[\P\big\{I_s=x,\ H_{t}\leq y\mid\mathcal{F}_{s-1}\big\}\Big]\notag\\
	&= \E\Big[\1_{\{H_{t}\leq y\}}\big\{\b\1_{\{x=0\}} + (1-\b)\1_{\{x=1\}}\big\}\Big]\notag\\
	&= \P\big\{H_{t}\leq y\big\}\big[\b\1_{\{x=0\}} + (1-\b)\1_{\{x=1\}}\big]\notag\\
	&= \P\big\{H_{t}\leq y\big\}\P\big\{I_s=x\big\},\label{eq:prob prod}
\end{align}
where we used the LIE in the first step, \eqref{eq:CCDF H} in the second step, and \eqref{eq:(p.39)} in the final step.
The case $s<t$ can be dealt with similarly, such that \eqref{eq:prob prod} holds for all $s\neq t$.

Adopting similar arguments, we may also deduce that
\begin{multline*}
	\P\big\{I_{s_1}= x_1,\ldots,I_{s_{\ell}}= x_{\ell},\ H_{t_1}\leq y_1,\ldots,H_{t_m}\leq y_{m}\big\}\\
	=\P\big\{I_{s_1}= x_1\big\}\cdot\ldots\cdot \P\big\{I_{s_{\ell}}= x_{\ell}\big\}\cdot \P\big\{H_{t_1}\leq y_1\big\}\cdot\ldots\cdot \P\big\{H_{t_m}\leq y_m\big\}
\end{multline*}
for any indices $s_1<\ldots<s_{\ell}$ and $t_1<\ldots<t_m$ ($\ell,m\in\mathbb{N}$) with $s_i\neq t_j$ ($i=1,\ldots,\ell$ and $j=1,\ldots,m$), as required for full independence.
\end{proof}

\end{document}